\documentclass[twoside]{article}
\usepackage{hyperref}
\usepackage{url}
\usepackage{amsthm}
\usepackage{amsmath}
\usepackage{amsfonts}
\usepackage{mathrsfs}
\usepackage{algorithm}
\usepackage[noend]{algorithmic}
\usepackage{bm}
\usepackage{pstricks}
\usepackage{tikz}
\usepackage{subcaption}
\usepackage{wrapfig}
\usepackage{mathtools}

\usepackage[accepted]{icml2016-mod}

\newcommand{\lb}{\operatorname{lb}}
\newcommand{\ub}{\operatorname{ub}}
\newcommand{\pruned}{\operatorname{pruned}}
\newcommand{\canchange}{\operatorname{canchange}}
\newcommand{\closest}{\operatorname{closest}}
\newcommand{\parent}{\operatorname{parent}}

\newcommand{\N}{\mathscr{N}}

\newtheorem{defn}{Definition}
\newtheorem{thm}{Theorem}
\newtheorem{lemma}{Lemma}

\begin{document}




\twocolumn[
\vspace*{-0.1in}
\icmltitle{Dual-tree $k$-means with bounded iteration runtime}

\vspace*{-0.5em}
\icmlauthor{Ryan R. Curtin}{ryan@ratml.org}
\icmladdress{School of Computational Science and Engineering, \\
Georgia Institute of Technology, Atlanta, GA 30332 USA}

\icmlkeywords{clustering, k-means, dual-tree algorithms, kd-tree, cover tree}

\vskip 0.15in
]

\begin{abstract}
$k$-means is a widely used clustering algorithm, but for $k$ clusters and a
dataset size of $N$, each iteration of Lloyd's algorithm costs $O(kN)$ time.
Although there are existing techniques to accelerate single Lloyd iterations,
none of these are tailored to the case of large $k$, which is increasingly
common as dataset sizes grow.  We propose a dual-tree algorithm that gives the
{\it exact} same results as standard $k$-means; when using cover trees, we use
adaptive analysis techniques to, under some assumptions, bound the
single-iteration runtime of the algorithm as $O(N + k \log k)$.  To our
knowledge these are the first sub-$O(kN)$ bounds for exact Lloyd iterations.  We
then show that this theoretically favorable algorithm performs competitively in
practice, especially for large $N$ and $k$ in low dimensions.  Further, the
algorithm is tree-independent, so any type of tree may be used.
\end{abstract}

\vspace*{-1.8em}
\section{Introduction}
\vspace*{-0.4em}

Of all the clustering algorithms in use today, among the simplest and most
utilized is the venerated $k$-means clustering algorithm, usually implemented
via Lloyd's algorithm: given a dataset $S$, repeat the following two steps (a
`Lloyd iteration') until the centroids of each of the $k$ clusters converge:

\vspace*{-1.0em}
\begin{enumerate} \itemsep -2pt
  \item Assign each point $p_i \in S$ to the cluster with nearest centroid.
  \item Recalculate the centroids for each cluster using the assignments of each
point in $S$.
\end{enumerate}
\vspace*{-1.0em}

Clearly, a simple implementation of this algorithm will take $O(kN)$ time where
$N = |S|$.  However, the number of iterations is not bounded unless the
practitioner manually sets a maximum, and $k$-means is not guaranteed to
converge to the global best clustering.  Despite these shortcomings, in practice
$k$-means tends to quickly converge to reasonable solutions.  Even so, there is
no shortage of techniques for improving the clusters $k$-means converges to:
refinement of initial centroids \cite{bradley1998refining} and weighted
sampling of initial centroids \cite{arthur2007k} are just two of many popular
existing strategies.

There are also a number of methods for accelerating the runtime of a single
iteration of $k$-means.  In general, these ideas use the triangle inequality to
prune work during the assignments step.  Algorithms of this sort include the
work of Pelleg and Moore \yrcite{pelleg1999accelerating}, Elkan
\yrcite{elkan2003using}, Hamerly \yrcite{hamerly2010making}, and Ding et~al.
\yrcite{ding2015yinyang}.  However, the scaling of these algorithms can make
them problematic for the case of large $k$ and large $N$.


\setlength{\textfloatsep}{0.4em}
\begin{table*}[t!]
\begin{center}
\begin{tabular}{|c|c|c|c|}
\hline
{\bf Algorithm} & {\bf Setup} & {\bf Worst-case} & {\bf Memory} \\
\hline
naive & n/a & $O(kN)$ & $O(k + N)$ \\
blacklist & $O(N \log N)$ & $O(kN)$ & $O(k \log N + N)$ \\
elkan & n/a & $O(k^2 + kN)$ & $O(k^2 + kN)$ \\
hamerly & n/a & $O(k^2 + kN)$ & $O(k + N)$ \\
yinyang & $O(k^2 + kN)$ & $O(kN)$ & $O(kN)$ \\
{\bf dualtree} & $O(N \log N)$ & $O(k \log k + N)^1$ & $O(k + N)$ \\
\hline
\end{tabular}
\end{center}
\vspace*{-1.0em}
\caption{Runtime and memory bounds for $k$-means algorithms.}
\label{tab:runtimes}
\vspace*{-1.0em}
\end{table*}

In this paper, we describe a dual-tree $k$-means algorithm tailored to the large
$k$ and large $N$ case that outperforms all competing algorithms in that
setting; this dual-tree algorithm also has bounded single-iteration runtime in
some situations (see Section \ref{sec:theory}).  This algorithm, which is our
main contribution, has several appealing aspects:

\vspace*{-0.5em}
\begin{itemize} \itemsep -1pt
  \item {\bf Empirical efficiency}.  In the large $k$ and large $N$ setting for
which this algorithm is designed, it outperforms all other alternatives, and
scales better to larger datasets.  The algorithm is especially efficient in
low dimensionality.

  \item {\bf Runtime guarantees}.  Using adaptive runtime analysis
techniques, we bound the single-iteration runtime of our algorithm with respect
to the intrinsic dimensionality of the centroids and data, when cover trees are
used.  This gives theoretical support for the use of our algorithm in large data
settings.  In addition, the bound is dependent on the intrinsic dimensionality,
{\it not} the extrinsic dimensionality.

  \item {\bf Generalizability}.  We develop our algorithm using a
tree-independent dual-tree algorithm abstraction \cite{curtin2013tree}; this
means that our algorithm may be used with {\it any} type of valid tree.  This
includes not just $kd$-trees but also metric trees, cone trees,
octrees, and others.  Different trees may be suited to different types of data,
and since our algorithm is general, one may use any type of tree as a
plug-and-play parameter.

  \item {\bf Separation of concerns}.  The abstraction we use to develop our
algorithm allows us to focus on and formalize each of the pruning rules
individually (Section \ref{sec:strategies}).  This aids understanding of the
algorithm and eases insertion of future improvements and better pruning rules.
\end{itemize}
\vspace*{-0.8em}

Section \ref{sec:scaling} shows the relevance of the large $k$ case; then, in
Section \ref{sec:trees}, we show that we can build a tree on the $k$ clusters,
and then a dual-tree algorithm \cite{curtin2013tree} can be used to efficiently
perform an exact single iteration of $k$-means clustering.  Section
\ref{sec:strategies} details the four pruning strategies used in our algorithm,
and Section \ref{sec:algorithm} introduces the algorithm itself.  Sections
\ref{sec:theory} and \ref{sec:empirical} show the theoretical and empirical
results for the algorithm, and finally Section \ref{sec:conclusion} concludes
the paper and paints directions for future improvements.

\vspace*{-0.2em}
\section{Scaling $k$-means}
\label{sec:scaling}
\vspace*{-0.1em}

Although the original publications on $k$-means only applied the algorithm to a
maximum dataset size of 760 points, the half-century of relentless progress
since then has seen dataset sizes scale into billions.  Due to its
simplicity, though, $k$-means has remained relevant, and is still applied in
many large-scale applications.

In cases where $N$ scales but $k$ remains small, a good choice of algorithm is a
sampling algorithm, which will return an approximate clustering.  One sampling
technique, coresets, can produce good clusterings for $n$ in the millions using
several hundred or a few thousand points \cite{coresets}.  However, for large
$k$, the number of samples required to produce good clusterings can become
prohibitive.

For large $k$, then, we turn to an alternative approach: accelerating exact
Lloyd iterations.  Existing techniques include the brute-force
implementation, the {\it blacklist} algorithm
\cite{pelleg1999accelerating}, Elkan's algorithm \yrcite{elkan2003using}, and
Hamerly's algorithm \yrcite{hamerly2010making}, as well as the recent Yinyang
$k$-means algorithm \cite{ding2015yinyang}.  The blacklist algorithm builds a
$kd$-tree on the dataset and, while the tree is traversed, blacklists individual
clusters that cannot be the closest cluster (the {\it owner}) of any descendant
points of a node.  Elkan's algorithm maintains an upper bound and a lower bound
on the distance between each point and centroid; Hamerly's algorithm is a
memory-efficient
simplification of this technique.  The Yinyang algorithm
organizes the centroids into groups of about 10 (depending on algorithm
parameters) using 5 iterations of $k$-means on the centroids followed by a
single iteration of standard $k$-means on the points.  Once groups are built,
the Yinyang algorithm attempts to prune groups of centroids at a time using
rules similar to Elkan and Hamerly's algorithms.

Of these algorithms, only Yinyang $k$-means considers centroids in groups at
all, but it does not consider points in groups.  On the other hand, the
blacklist algorithm is the only algorithm that builds a tree on the points and
is able to assign multiple points to a single cluster at once.  So, although
each algorithm has its own useful region, none of the four we have considered
here are particularly suited to the case of large $N$ {\bf and} large $k$.

Table \ref{tab:runtimes} shows setup costs,
worst-case per-iteration runtimes, and memory usage of each of these algorithms
as well as the proposed dual-tree algorithm\footnote{The dual-tree algorithm
worst-case runtime bound also depends on some assumptions on dataset-dependent
constants.  This is detailed further in Section \ref{sec:theory}.}.  The
expected runtime of the blacklist algorithm is, under some assumptions,
$O(k + k \log N + N)$ per iteration.  The expected runtime of Hamerly's and
Elkan's algorithm is $O(k^2 + \alpha N)$ time, where $\alpha$ is the expected
number of clusters visited by each point (in both Elkan and Hamerly's results,
$\alpha$ seems to be small).

However, none of these algorithms are specifically tailored to the large $k$
case, and the large $k$ case is common.  Pelleg and Moore
\yrcite{pelleg1999accelerating} report several hundred clusters in a subset of
800k objects from the SDSS dataset.  Clusterings for $n$-body simulations on
astronomical data often involve several thousand clusters
\cite{kwon2010scalable}.  Csurka
et~al. \yrcite{csurka} extract vocabularies from image sets using $k$-means with
$k \sim 1000$.  Coates et~al. \yrcite{coates} show that $k$-means can work
surprisingly well for unsupervised feature learning for images, using $k$ as
large as 4000 on 50000 images.  Also, in text mining, datasets can have up to
18000 unique labels \cite{bengio2010label}.  Can and Ozkarahan
\yrcite{can1990concepts} suggest that the number of clusters in text data is
directly related to the size of the vocabulary, suggesting $k \sim mN/t$ where
$m$ is the vocabulary size, $n$ is the number of documents, and $t$ is the
number of nonzero entries in the term matrix.
Thus, it is important to have an algorithm with favorable scaling properties for
both large $k$ and $N$.

\vspace*{-0.4em}
\section{Tree-based algorithms}
\label{sec:trees}
\vspace*{-0.2em}

The blacklist algorithm is an example of a {\it single-tree algorithm}: one tree
(the {\it reference tree}) is built on the dataset, and then that tree is
traversed.  This approach is applicable to a surprising variety of
other problems, too \cite{bentley1975multidimensional, moore1998very,
curtin2013fast}.  Following the blacklist algorithm, then, it is only
natural to build a tree on the data points.  Tree-building is (generally) a
one-time $O(N \log N)$ cost and for large $N$ or $k$, the cost of tree
building is often negligible compared to the time it takes to perform the
clustering.

\setlength{\textfloatsep}{0.4em}
\begin{figure*}[t]
\centering
\begin{subfigure}[b]{0.31\textwidth}
  \begin{tikzpicture}
    \filldraw [lightgray!60!blue] (0.0, 0.0) circle (0.6) { };
    \draw [thin] (0.0, 0.0) circle (0.6) { };
    \node [ ] at (0.0, 0.0) { $\mathscr{N}_q$ };

    \filldraw [lightgray!60!red] (-0.6, 0.2) circle (0.4) { };
    \draw [thin] (-0.6, 0.2) circle (0.4) { };
    \node [ ] at (-0.6, 0.2) { $\mathscr{N}_{r2}$ };

    \filldraw [lightgray!60!red] (2.1, 0.1) circle (0.4) { };
    \draw [thin] (2.1, 0.1) circle (0.4) { };
    \node [ ] at (2.1, 0.1) { $\mathscr{N}_r$ };

    \draw [black,dashed,domain=-30:30] plot ({1.63246*cos(\x)},
{1.63246*sin(\x)});
    \draw [black,dashed,domain=150:210] plot ({1.63246*cos(\x)},
{1.63246*sin(\x)});

    \draw (0.6, 0.0) -- (1.63246, 0.0) { };
    \draw (0.6, 0.0) -- (0.7, 0.1) { };
    \draw (0.6, 0.0) -- (0.7, -0.1) { };
    \draw (1.63246, 0.0) -- (1.53246, 0.1) { };
    \draw (1.63246, 0.0) -- (1.53246, -0.1) { };
    \node [ ] at (1.1, 0.3) { $\scriptstyle{\ub(\N_q)}$ };
  \end{tikzpicture}
  \caption{$\mathscr{N}_r$ can be pruned.}
  \label{fig:prune-1}
\end{subfigure}
\begin{subfigure}[b]{0.31\textwidth}
  \begin{tikzpicture}
    \draw [gray,dashed] (-0.65, -0.5) -- (-0.65, 1.1) { };

    \filldraw [lightgray!60!blue] (0.0, 0.0) circle (0.05) { };
    \draw [thin] (0.0, 0.0) circle (0.05) { };
    \node [ ] at (-0.26, 0.0) { $p_q$ };

    \filldraw [lightgray!60!red] (0.4, 0.2) circle (0.05) { };
    \draw [thin] (0.4, 0.2) circle (0.05) { };
    \node [ ] at (0.3, 0.42) { $c_j$ };

    \draw [gray] (0.44, 0.23) -- (1.2, 0.6) { };
    \draw [gray] (0.44, 0.23) -- (0.47, 0.3) { };
    \draw [gray] (0.44, 0.23) -- (0.51, 0.2) { };
    \draw [gray] (1.2, 0.6) -- (1.17, 0.53) { };
    \draw [gray] (1.2, 0.6) -- (1.13, 0.63) { };
    \node [ ] at (0.9, 0.1) { $m_j$ };

    \draw [black,dashed,domain=-30:50] plot ({1.3416*cos(\x)},
{1.3416*sin(\x)});

    \draw [gray] (0.045, -0.005) -- (1.3212, -0.23297) { };
    \draw [gray] (0.045, -0.005) -- (0.1, 0.03) { };
    \draw [gray] (0.045, -0.005) -- (0.095, -0.065) { };
    \draw [gray] (1.3212, -0.23297) -- (1.25, -0.26) { };
    \draw [gray] (1.3212, -0.23297) -- (1.27, -0.17) { };
    \node [ ] at (0.5, -0.4) { $\scriptstyle{\ub(p_q) + m_j}$ };

    \filldraw [lightgray!60!red] (2.4, 0.3) circle (0.05) { };
    \draw [thin] (2.4, 0.3) circle (0.05) { };
    \node [ ] at (2.4, 0.5) { $c_k$ };

    \draw [gray] (2.45, 0.3) -- (3.3, 0.3) { };
    \draw [gray] (2.45, 0.3) -- (2.52, 0.27) { };
    \draw [gray] (2.45, 0.3) -- (2.52, 0.33) { };
    \draw [gray] (3.3, 0.3) -- (3.23, 0.27) { };
    \draw [gray] (3.3, 0.3) -- (3.23, 0.33) { };
    \node [ ] at (2.8, 0.1) { $\scriptstyle{\min_k m_k}$ };

    \draw [black,dotted] (2.4, 0.3) circle (0.9) { };

  \end{tikzpicture}
  \caption{$p_q$'s owner cannot change.}
  \label{fig:prune-2}
\end{subfigure}
\begin{subfigure}[b]{0.31\textwidth}
  \begin{tikzpicture}
    \draw [gray,dashed] (-0.9, -0.5) -- (-0.9, 1.1) { };

    \filldraw [lightgray!60!blue] (0.0, 0.0) circle (0.05) { };
    \draw [thin] (0.0, 0.0) circle (0.05) { };
    \node [ ] at (-0.26, 0.0) { $p_q$ };

    \filldraw [lightgray!60!red] (0.4, 0.2) circle (0.05) { };
    \draw [thin] (0.4, 0.2) circle (0.05) { };
    \node [ ] at (0.3, 0.42) { $c_j$ };

    \draw [gray] (0.44, 0.23) -- (1.2, 0.6) { };
    \draw [gray] (0.44, 0.23) -- (0.47, 0.3) { };
    \draw [gray] (0.44, 0.23) -- (0.51, 0.2) { };
    \draw [gray] (1.2, 0.6) -- (1.17, 0.53) { };
    \draw [gray] (1.2, 0.6) -- (1.13, 0.63) { };
    \node [ ] at (0.9, 0.1) { $m_j$ };

    \draw [black,dashed,domain=-30:50] plot ({1.3416*cos(\x)},
{1.3416*sin(\x)});

    \draw [gray] (0.045, -0.005) -- (1.3212, -0.23297) { };
    \draw [gray] (0.045, -0.005) -- (0.1, 0.03) { };
    \draw [gray] (0.045, -0.005) -- (0.095, -0.065) { };
    \draw [gray] (1.3212, -0.23297) -- (1.25, -0.26) { };
    \draw [gray] (1.3212, -0.23297) -- (1.27, -0.17) { };
    \node [ ] at (0.5, -0.4) { $\scriptstyle{\ub(p_q) + m_j}$ };

    \filldraw [lightgray!60!red] (2.0, 0.3) circle (0.05) { };
    \draw [thin] (2.0, 0.3) circle (0.05) { };
    \node [ ] at (2.0, 0.5) { $c_k$ };

    \draw [gray] (2.05, 0.3) -- (2.9, 0.3) { };
    \draw [gray] (2.05, 0.3) -- (2.12, 0.33) { };
    \draw [gray] (2.05, 0.3) -- (2.12, 0.27) { };
    \draw [gray] (2.9, 0.3) -- (2.83, 0.33) { };
    \draw [gray] (2.9, 0.3) -- (2.83, 0.27) { };
    \node [ ] at (2.4, 0.1) { $\scriptstyle{\min_k m_k}$ };

    \draw [black,dotted] (2.0, 0.3) circle (0.9) { };

  \end{tikzpicture}
\caption{$p_q$'s owner can change.}
\label{fig:prune-3}
\end{subfigure}
\vspace*{-0.7em}
\caption{Different pruning situations.}
\label{fig:prune_one}
\vspace*{-1.0em}
\end{figure*}
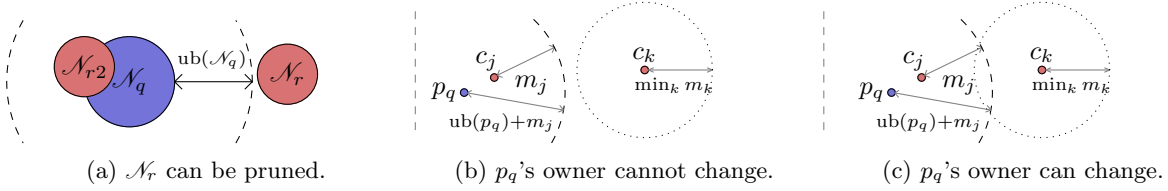

The speedup of the blacklist algorithm comes from the hierarchical nature of
trees: during the algorithm, we may rule out a cluster centroid for {\it many
points at once}.  The same reason is responsible for the impressive speedups
obtained for other single-tree algorithms, such as nearest neighbor search
\cite{bentley1975multidimensional, liu2004investigation}.  But
for nearest neighbor search, the nearest neighbor is often required not just for
a query point but instead a {\it query set}.  This observation
motivated the development of {\it dual-tree algorithms}, which also build a tree
on the query set (the {\it query tree}) in order to share work across query
points.  Both trees are recursed in such a way that combinations of query
nodes and reference nodes are visited.  Pruning criteria are applied to
these node combinations, and if a combination may be pruned, then the
recursion does not continue in that direction.

This approach
is applicable to $k$-means with large $k$: we may build a tree on the
$k$ cluster centroids, as well as a tree on the data points, and then we may
rule out {\it many} centroids for {\it many} points at once.

A recent result generalizes the class of dual-tree
algorithms, simplifying their expression and development
\cite{curtin2013tree}.  Any dual-tree algorithm can be decomposed into three
parts: a type of space tree, a pruning dual-tree traversal, and a point-to-point
\texttt{BaseCase()} function and node-to-node \texttt{Score()} function that
determines when pruning is possible.  Precise definitions and details of the
abstraction are given by \citet{curtin2013tree}, but for our purposes, this means
that we can describe a dual-tree $k$-means algorithm entirely with a
straightforward \texttt{BaseCase()} function and \texttt{Score()} function.  Any
tree and any traversal can then be used to create a working dual-tree algorithm.


The two types of trees we will explicitly consider in this paper are the
$kd$-tree and the cover tree \cite{langford2006}, but it should be remembered
that the algorithm as provided is sufficiently general to work with any other
type of tree.  Therefore, we standardize notation for trees: a tree is denoted
with $\mathscr{T}$, and a node in the tree is denoted by $\mathscr{N}$.  Each
node in a tree may have children; the set of children of $\mathscr{N}_i$ is
denoted $\mathscr{C}_i$.  In addition, each node may hold some points; this set
of points is denoted $\mathscr{P}_i$.  Lastly, the set of {\it descendant}
points of a node $\mathscr{N}_i$ is denoted $\mathscr{D}^p_i$.  The descendant
points are all points held by descendant nodes, and it is important to note that
the set $\mathscr{P}_i$ is {\it not} equivalent to $\mathscr{D}^p_i$.  This
notation is taken from \citet{curtin2013tree} and is detailed more
comprehensively there.  Lastly, we say that a centroid $c$ {\it owns} a point
$p$ if $c$ is the closest centroid to $p$.

\vspace*{-0.5em}
\section{Pruning strategies}
\label{sec:strategies}
\vspace*{-0.2em}

All of the existing accelerated $k$-means algorithms operate by avoiding
unnecessary work via the use of pruning strategies.  Thus, we will pursue four
pruning strategies, each based on or related to earlier work
\cite{pelleg1999accelerating, elkan2003using, hamerly2010making}.

These pruning strategies are meant to be used during the dual-tree traversal,
for which we have built a query tree $\mathscr{T}_q$ on the points and
a reference tree $\mathscr{T}_r$ on the centroids.  Therefore, these pruning
strategies consider not just combinations of single points and centroid
$p_q$ and $c_i$, but the combination of sets of points and sets of centroids,
represented by a query tree node $\N_q$ and a centroid tree node $\N_r$.  This
allows us to prune many centroids for many points simultaneously.


{\bf Strategy one.} When visiting a particular combination $(\N_q, \N_r)$
(with $\N_q$ holding points in the dataset and $\N_r$ holding
centroids), the combination should be pruned if every descendant centroid in
$\N_r$ can be shown to
own none of the points in $\N_q$.  If we have cached an upper bound $\ub(\N_q)$
on the distance between any descendant point of $\N_q$ and its nearest cluster
centroid that satisfies

\vspace*{-1.0em}
\begin{equation}
\ub(\N_q) \ge \max_{p_q \in \mathscr{D}^p_q} d(p_q,
c_q)
\end{equation}
\vspace*{-1.0em}

\noindent where $c_q$ is the cluster centroid nearest to point $p_q$, then the
node $\N_r$ can contain no centroids that own any descendant points of $\N_q$ if

\vspace*{-1.0em}
\begin{equation}
d_{\min}(\N_q, \N_r) > \ub(\N_q).
\label{eqn:prune}
\end{equation}
\vspace*{-1.5em}

This relation bears similarity to the pruning rules for nearest neighbor search
\cite{curtin2013tree} and max-kernel search \cite{curtin2014dual}.  Figure
\ref{fig:prune-1} shows a situation where $\N_r$ can be pruned; in this case,
ball-shaped tree nodes are used, and the upper bound $\ub(\N_q)$ is set to
$d_{\max}(\N_q, \N_{r2})$.

{\bf Strategy two.} The recursion down a particular branch of the query tree
should
terminate early if we can determine that only one cluster can possibly own all
of the descendant points of that branch.  This is related to the first strategy.
If we have been caching the number of pruned centroids (call this
$\pruned(\N_q)$), as well as the identity of any
arbitrary non-pruned centroid (call this $\closest(\N_q)$), then if
$\pruned(\N_q) = k - 1$, we may conclude that the
centroid $\closest(\N_q)$ is the owner of all descendant
points of $\N_q$, and there is no need for further recursion in
$\N_q$.

{\bf Strategy three.} The traversal should not visit nodes whose owner could not
have possibly changed between iterations; that is, the tree should be coalesced
to include only nodes whose owners may have changed.

There are two easy ways to use the triangle inequality to show that the owner of
a point cannot change between iterations.  Figures \ref{fig:prune-2} and
\ref{fig:prune-3} show the
first: we have a point $p_q$ with owner $c_j$ and second-closest
centroid $c_k$.  Between iterations, each centroid will move when it is
recalculated; define the distance that centroid $c_i$ has moved as
$m_i$.  Then we bound the distances for the next
iteration: $d(p_q, c_j) + m_j$ is an upper bound on the distance from $p_q$
to its owner next iteration, and $d(p_q, c_k) - \max_i m_i$ is a lower bound on
the distance from $p_q$ to its second closest centroid next iteration.  We
may use these bounds to conclude that if

\vspace*{-1.2em}
\begin{equation}
d(p_q, c_j) + m_j < d(p_q, c_k) - \max_i m_i,
\end{equation}
\vspace*{-1.6em}

\noindent then the owner of $p_q$ next iteration must be $c_j$.  Generalizing
from individual points $p_q$ to tree nodes $\N_q$ is easy.
This pruning strategy can only be used when all descendant points of
$\N_q$ are owned by a single centroid, and in order to perform the
prune, we need to establish a lower bound on the distance between any
descendant point of the node $\N_q$ and the second closest centroid.
Call this bound $\lb(\N_q)$.  Remember that
$\ub(\N_q)$ provides an upper bound on the distance
between any descendant point of $\N_q$ and its nearest centroid.  Then,
if all descendant points of $\N_q$ are owned by some cluster $c_j$ in
one iteration, and

\vspace*{-1.4em}
\begin{equation}
\ub(\N_q) + m_j < \lb(\N_q) -
\max_i m_i,
\label{eqn:static-1}
\end{equation}
\vspace*{-1.4em}

\noindent then $\N_q$ is owned by cluster $c_j$ in the next iteration.
Implementationally, it is convenient to have $\lb(\N_q)$ store a
lower bound on the distance between any descendant point of $\N_q$ and
the nearest pruned centroid.  Then, if $\N_r$ is entirely owned by one
cluster, all other centroids are pruned, and $\lb(\N_q)$
holds the necessary lower bound for pruning according to the rule above.

The second way to use the triangle inequality to show that an owner cannot
change depends on the distances between centroids.  Suppose that $p_q$ is
owned by $c_j$ at the current iteration; then, if

\vspace*{-1.3em}
\begin{equation}
d(p_q, c_j) - m_j < 2 \left( \min_{c_i \in C, c_i \ne c_j} d(c_i, c_j) \right)
\end{equation}
\vspace*{-1.3em}

\noindent then $c_j$ will own $p_q$ next iteration \cite{elkan2003using}.  We
may adapt this rule to tree nodes $\N_q$ in the same way as the previous rule;
if $\N_q$ is owned by cluster $c_j$ during this iteration and

\vspace*{-1.3em}
\begin{equation}
\ub(\N_q) + m_j < 2 \left( \min_{c_i \in C, c_i \ne c_j}
d(c_i, c_j) \right)
\label{eqn:static-2}
\end{equation}
\vspace*{-1.3em}

\noindent then $\N_q$ is owned by cluster $c_j$ in the next iteration.
Note that the above rules do work with individual points $p_q$ instead of nodes
$\N_q$ if we have a valid upper bound $\ub(p_q)$ and a
valid lower bound $\lb(p_q)$.  Any nodes or points that satisfy
the above conditions do not need to be visited during the next iteration, and
can be removed from the tree for the next iteration.

{\bf Strategy four.} The traversal should use bounding information from previous
iterations; for instance, $\ub(\N_q)$ should not be reset
to $\infty$ at the beginning of each iteration.  Between iterations, we may
update $\ub(\N_q)$, $\ub(p_q)$,
$\lb(\N_q)$, and $\lb(p_q)$ according to
the following rules:

\vspace*{-1.5em}
\begin{eqnarray}
\ub(\N_q) &\gets&
  \begin{cases}
    \ub(\N_q) + m_j & \text{if } \N_q \text{ is}\\
\multicolumn{2}{l}{\text{\ \ \ \ owned by a single cluster $c_j$}}
\\
    \ub(\N_q) + \max_i m_i & \text{if } \N_q \text{ is}\\
\multicolumn{2}{l}{\text{\ \ \ \ not owned by a single cluster},}
  \end{cases} \label{eqn:special} \\
\ub(p_q) &\gets& \ub(p_q) + m_j, \\
\lb(\N_q) &\gets& \lb(\N_q) -
\max_i m_i, \\
\lb(p_q) &\gets& \lb(p_q) - \max_i m_i.
\end{eqnarray}
\vspace*{-1.0em}

Special handling is required when descendant points of $\N_q$
are not owned by a single centroid (Equation \ref{eqn:special}).  It is also
true that for a child node $\N_c$ of $\N_q$, $\ub(\N_q)$ is a valid upper bound
for $\N_c$ and $\lb(\N_q)$ is a valid lower bound for $\N_c$: that is, the upper
and lower bounds may be taken from a parent, and they are still valid.

\vspace*{-0.6em}
\section{The dual-tree $k$-means algorithm}
\label{sec:algorithm}
\vspace*{-0.3em}

These four pruning strategies lead to a high-level $k$-means algorithm,
described in Algorithm \ref{alg:high_level}.  During the course of this
algorithm, to implement each of our pruning strategies, we will need to maintain
the following quantities:

\vspace*{-1.0em}
\begin{itemize} \itemsep -1.5pt
  \item $\ub(\N_q)$: an upper bound on the distance
between any descendant point of a node $\N_q$ and the nearest centroid
to that point.
  \item $\lb(\N_q)$: a lower bound on the distance
between any descendant point of a node $\N_q$ and the nearest pruned
centroid.
  \item $\pruned(\N_q)$: the number of centroids pruned
during traversal for $\N_q$.
  \item $\closest(\N_q)$: if $\pruned(\N_q) = k - 1$, this
holds the owner of all descendant points of $\N_q$.
  \item $\canchange(\N_q)$: whether or not
$\N_q$ can change owners next iteration.
  \item $\ub(p_q)$: an upper bound on the distance between point
$p_q$ and its nearest centroid.
  \item $\lb(p_q)$: a lower bound on the distance between point
$p_q$ and its second nearest centroid.
  \item $\closest(p_q)$: the closest centroid to $p_q$ (this is
also the owner of $p_q$).
  \item $\canchange(p_q)$: whether or not $p_q$ can change owners
next iteration.
\end{itemize}
\vspace*{-0.8em}

At the beginning of the algorithm, each upper bound is initialized to $\infty$,
each lower bound is initialized to $\infty$, $\pruned(\cdot)$
is initialized to $0$ for each node, and
$\closest(\cdot)$ is initialized to an invalid centroid for each
node and point.  $\canchange(\cdot)$ is set to {\tt
true} for each node and point.  Thus line
6 does nothing on the first iteration.

\setlength{\textfloatsep}{0.4em}
\begin{algorithm}[t!]
\begin{algorithmic}[1]
  \STATE {\bf Input:} dataset $S \in \mathcal{R}^{N \times d}$, initial
centroids $C \in \mathcal{R}^{k \times d}$.
  \STATE {\bf Output:} converged centroids $C$.
  \medskip
  \STATE $\mathscr{T} \gets$ a tree built on $S$
  \WHILE{centroids $C$ not converged}
    \STATE \COMMENT{Remove nodes in the tree if possible.}
    \STATE $\mathscr{T} \gets \mathtt{CoalesceNodes(}\mathscr{T}\mathtt{)}$
    \STATE $\mathscr{T}_c \gets$ a tree built on $C$
    \medskip
    \STATE \COMMENT{Call dual-tree algorithm.}
    \STATE Perform a dual-tree recursion with $\mathscr{T}$, $\mathscr{T}_c$,
\texttt{BaseCase()}, and \texttt{Score()}.
    \medskip
    \STATE \COMMENT{Restore the tree to its non-coalesced form.}
    \STATE $\mathscr{T} \gets \mathtt{DecoalesceNodes(\mathscr{T})}$
    \medskip
    \STATE \COMMENT{Update centroids and bounding information.}
    \STATE $C \gets \mathtt{UpdateCentroids(}\mathscr{T}\mathtt{)}$
    \STATE $\mathscr{T} \gets \mathtt{UpdateTree(}\mathscr{T}\mathtt{)}$
  \ENDWHILE
  \STATE {\bf return} $C$
\end{algorithmic}
\caption{High-level outline of dual-tree $k$-means.}
\label{alg:high_level}
\end{algorithm}

First, consider the dual-tree algorithm called on line
9.  As detailed earlier, we can describe a dual-tree
algorithm as a combination of tree type, traversal, and point-to-point
\texttt{BaseCase()} and node-to-node \texttt{Score()} functions.  Thus, we
need only present \texttt{BaseCase()} (Algorithm \ref{alg:base_case}) and
\texttt{Score()} (Algorithm \ref{alg:score})\footnote{In these algorithms, we
assume that any point present in a node $\N_i$ will also be present in at least
one child $\N_c \in \mathscr{C}_i$.  It is possible to fully
generalize to any tree type, but the exposition is significantly more complex,
and our assumption covers most standard tree types anyway.}.

The \texttt{BaseCase()} function is simple: given a point $p_q$ and a
centroid $c_r$, the distance $d(p_q, c_r)$ is calculated; $\ub(p_q)$,
$\lb(p_q)$, and $\closest(p_q)$ are updated if needed.

\texttt{Score()} is more complex.  The first stanza (lines 4--6) takes the
values of $\pruned(\cdot)$ and $\lb(\cdot)$ from the parent node of $\N_q$; this
is necessary to prevent $\pruned(\cdot)$ from undercounting.  Next, we prune if
the owner of $\N_q$ is already
known (line 7).  If the minimum distance between any descendant point of $\N_q$
and any descendant centroid of $\N_r$ is greater than $\ub(\N_q)$,
then we may prune the combination (line 16).  In that case we may also improve
the lower bound (line 14).  Note the special handling in line 15: our definition
of tree allows points to be held in more than one node; thus, we must avoid
double-counting clusters that we prune.\footnote{For trees like the $kd$-tree
and the metric tree, which do not hold points in more than one node, no special
handling is required: we will never prune a cluster twice for a given query node
$\N_q$.}.  If the node combination cannot be pruned in this way, an attempt is
made to update the upper bound (lines 17--20).  Instead of using $d_{\max}(\N_q,
\N_r)$, we may use a tighter upper bound: select any
descendant centroid $c$ from $\N_r$ and use $d_{\max}(\N_q, c)$.  This still
provides a valid upper bound, and in practice is generally smaller than
$d_{\max}(\N_q, \N_r)$.  We simply set $\closest(\N_q)$ to $c$ (line 20);
$\closest(\N_q)$ only holds the owner of $\N_q$ if all centroids
except one are pruned---in which case the owner {\it must} be $c$.

\setlength{\textfloatsep}{0.4em}
\begin{algorithm}[t!]
\begin{algorithmic}[1]
  \STATE {\bf Input:} query point $p_q$, reference centroid $c_r$
  \STATE {\bf Output:} distance between $p_q$ and $c_r$
  \medskip
  \IF{$d(p_q, c_r) < \ub(p_q)$}
    \STATE $\lb(p_q) \gets \ub(p_q)$
    \STATE $\ub(p_q) \gets d(p_q, c_r)$
    \STATE $\closest(p_q) \gets c_r$
  \ELSIF{$d(p_q, c_r) < \lb(p_q)$}
    \STATE $\lb(p_q) \gets d(p_q, c_r)$
  \ENDIF
  \medskip
  \STATE {\bf return} $d(p_q, c_r)$
\end{algorithmic}
\caption{\texttt{BaseCase()} for dual-tree $k$-means.}
\label{alg:base_case}
\end{algorithm}

\begin{algorithm}[t!]
\begin{algorithmic}[1]
  \STATE {\bf Input:} query node $\N_q$, reference node $\N_r$
  \STATE {\bf Output:} score for node combination $(\N_q,
\N_r)$, or $\infty$ if the combination can be pruned
  \medskip
  \STATE \COMMENT{Update the number of pruned nodes, if needed.}
  \IF{$\N_q$ not yet visited and is not the root node}
    \STATE $\pruned(\N_q) \gets
\parent(\N_q)$
    \STATE $\lb(\N_q) \gets
\lb(\parent(\N_q))$
  \ENDIF
  \STATE{{\bf if} $\pruned(\N_q) = k - 1$ {\bf then return} $\infty$}
  \medskip
  \STATE $s \gets d_{\min}(\N_q, \N_r)$
  \STATE $c \gets \mathrm{any\ descendant\ cluster\ centroid\ of } \N_r$
  \IF{$d_{\min}(\N_q, \N_r) >
\ub(\N_q)$}
    \STATE \COMMENT{This cluster node owns no descendant points.}
    \IF{$d_{\min}(\N_q, \N_r) <
\lb(\N_q)$}
      \STATE \COMMENT{Improve the lower bound for pruned nodes.}
      \STATE $\lb(\N_q) \gets d_{\min}(\N_q,
\N_r)$
    \ENDIF
    \STATE $\pruned(\N_q) \mathrel{+}= |\mathscr{D}^p_r \setminus \{ \textrm{clusters
not pruned} \}|$
    \STATE $s \gets \infty$
  \medskip
  \ELSIF{$d_{\max}(\N_q, c) <
\ub(\N_q)$}
    \STATE \COMMENT{We may improve the upper bound.}
    \STATE $\ub(\N_q) \gets d_{\max}(\N_q,
\N_r)$
    \STATE $\closest(\N_q) \gets c$
  \ENDIF
  \medskip
  \STATE \COMMENT{Check if all clusters (except one) are pruned.}
  \STATE {\bf if} $\pruned(\N_q) = k - 1$ {\bf then return} $\infty$
  \medskip
  \STATE {\bf return} $s$
\end{algorithmic}
\caption{\texttt{Score()} for dual-tree $k$-means.}
\label{alg:score}
\end{algorithm}

\begin{algorithm}[t!]
\begin{algorithmic}[1]
  \STATE {\bf Input:} tree $\mathscr{T}$ built on dataset $S$
  \STATE {\bf Output:} new centroids $C$
  \medskip
  \STATE $C := \{ c_0, \ldots, c_{k - 1} \} \gets \bm{0}^{k \times d}$; \ $n =
\bm{0}^k$
  \medskip
  \STATE \COMMENT{$s$ is a stack.}
  \STATE $s \gets \{ \operatorname{root}(\mathscr{T}) \}$
  \WHILE{$|s| > 0$}
    \STATE $\N_i \gets s\mathtt{.pop()}$
    \IF{$\pruned(\N_i) = k - 1$}
      \STATE \COMMENT{The node is entirely owned by a cluster.}
      \STATE $j \gets \mathrm{index\ of } \closest(\N_i)$
      \STATE $c_j \gets c_j + |\mathscr{D}^p_i|
\operatorname{centroid}(\N_i)$
      \STATE $n_j \gets n_j + |\mathscr{D}^p_i|$
    \ELSE
      \STATE \COMMENT{The node is not entirely owned by a cluster.}
      \STATE {{\bf if} $|\mathscr{C}_i| > 0$ {\bf then}
$s\mathtt{.push(}\mathscr{C}_i\mathtt{)}$}
      \STATE {\bf else}
        \STATE {\ \ \ \ {\bf for} $p_i \in \mathscr{P}_i$ not yet considered}
          \STATE \ \ \ \ \ \ \ $j \gets \mathrm{index\ of } \closest(p_i)$
          \STATE \ \ \ \ \ \ \ $c_j \gets c_j + p_i$; \ \ $n_j \gets n_j + 1$
    \ENDIF
  \ENDWHILE
  \medskip
  \STATE{{\bf for} $c_i \in C${\bf, if} $n_i > 0$ {\bf then} $c_i \gets c_i /
n_i$}
  \STATE {\bf return} $C$
\end{algorithmic}
\caption{\texttt{UpdateCentroids()}.}
\label{alg:update_centroids}
\end{algorithm}

Thus, at the end of the dual-tree algorithm, we know the owner of every node (if
it exists) via $\closest(\cdot)$ and $\pruned(\cdot)$, and we know the owner of
every point via $\closest(\cdot)$.  A simple
algorithm to do this is given here as Algorithm \ref{alg:update_centroids}
(\texttt{UpdateCentroids()}); it
is a depth-first recursion through the tree that terminates a branch when a node
is owned by a single cluster.


Next is updating the bounds in the tree and determining if nodes and
points can change owners next iteration; this work is encapsulated in the
\texttt{UpdateTree()} algorithm, which is an implementation of strategies 3 and
4 (see the appendix for details).  Once
\texttt{UpdateTree()} sets the correct value of $\canchange(\cdot)$ for every
point and node, we coalesce the tree for the next iteration with the
\texttt{CoalesceTree()} function.  Coalescing the tree is straightforward:
we simply remove any nodes from the tree where $\canchange(\cdot)$
is \texttt{false}.  This leaves a smaller tree with no nodes where
$\canchange(\cdot)$ is \texttt{false}.
Decoalescing the tree (\texttt{DecoalesceTree()}) is done by restoring
the tree to its original state.  See the appendix for more details.

\vspace*{-0.7em}
\section{Theoretical results}
\label{sec:theory}
\vspace*{-0.4em}

Space constraints allow us to only provide proof sketches for the first two
theorems here.  Detailed proofs are given in the appendix.

\begin{thm}
A single iteration of dual-tree $k$-means as given in Algorithm
\ref{alg:high_level} will produce exactly the same results as the
brute-force $O(kN)$ implementation.
\end{thm}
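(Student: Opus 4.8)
The plan is to reduce the theorem to a single invariant: at the end of the dual-tree traversal the recorded owner of every data point --- obtained either from $\closest(p_q)$, or from a node $\N_q$ whose ownership was fully resolved (so $\pruned(\N_q) = k-1$ and $\closest(\N_q)$ is its owner), or carried over from the previous iteration because $\N_q$ was coalesced --- coincides with that point's true nearest centroid this iteration. Once this is in hand, correctness of the recomputed centroids is immediate: \texttt{UpdateCentroids()} (Algorithm \ref{alg:update_centroids}) only sums each point into the accumulator of its recorded owner and divides by the count, using the identity $|\mathscr{D}^p_i| \cdot \operatorname{centroid}(\N_i) = \sum_{p \in \mathscr{D}^p_i} p$ to aggregate an entire singly-owned subtree at once. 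This is exactly the arithmetic the brute-force method performs, so identical assignments force identical centroids.

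Establishing the invariant rests on a bound-validity lemma that I would prove by induction over the traversal and over iterations. I would show that at every moment $\ub(\N_q)$ and $\ub(p_q)$ are genuine upper bounds on the distance to the nearest centroid, while $\lb(\N_q)$ and $\lb(p_q)$ are genuine lower bounds on the relevant pruned/second-nearest distances. The base case is the first iteration, where every bound is $\infty$; each in-traversal write only ever overwrites a bound with a provably valid quantity, since $d(p_q, c_r)$ is exact and the $d_{\max}$ to a descendant centroid used in \texttt{Score()} is attained by an actual centroid, so no point's true nearest-centroid distance can exceed it. The inductive step across iterations is precisely Strategy four (\eqref{eqn:special} and its companions): each movement $m_j$, or $\max_i m_i$, inflates an upper bound and deflates a lower bound by the triangle inequality, and a parent's bound remains valid for each child, as already noted in the text.

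Given valid bounds I would then verify each pruning decision is safe. For Strategy one, if $d_{\min}(\N_q, \N_r) > \ub(\N_q)$ as in \eqref{eqn:prune}, every descendant centroid of $\N_r$ is strictly farther from every descendant point of $\N_q$ than that point's current nearest centroid, so none can be an owner and discarding them loses no true owner. For the owner-cannot-change rules \eqref{eqn:static-1} and \eqref{eqn:static-2}, I would check that the left-hand side upper-bounds next iteration's distance to the current owner while the right-hand side lower-bounds next iteration's distance to every competitor, so the strict inequality forces the owner to persist; hence a coalesced node's carried-over owner is still its true nearest centroid. Correctness for non-coalesced points then follows because the pruning dual-tree traversal visits every node combination that is not pruned and \texttt{BaseCase()} computes exact distances, so a point's true nearest centroid is never pruned away and ends up in $\closest$.

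The step I expect to be the main obstacle is controlling the $\pruned(\N_q)$ bookkeeping, because a correctness failure --- as opposed to a mere efficiency loss --- can only arise if this count \emph{overstates} the number of genuinely excluded centroids, which would falsely trigger the single-owner shortcut of Strategy two and assign a whole subtree to the wrong owner. I would therefore argue that the parent-inheritance step together with the double-counting guard in \texttt{Score()} guarantees $\pruned(\N_q)$ never exceeds the number of centroids that provably own no descendant point of $\N_q$: a point held in several nodes is counted as pruned at most once, and inheriting the parent's count is sound because the parent's bounds are valid bounds for the child. With that one-sided monotonicity established, every use of the $\pruned(\N_q) = k-1$ rule is backed by a genuine exclusion of all other $k-1$ centroids, which closes the argument.
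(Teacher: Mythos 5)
Your proposal is correct and follows essentially the same route as the paper's own proof: an inductive bound-validity argument plus per-rule pruning safety (the paper's Lemma on the traversal), the cross-iteration bound maintenance and owner-cannot-change checks (the paper's Lemma on \texttt{UpdateTree()}), and the observation that \texttt{UpdateCentroids()} then reproduces the brute-force arithmetic. Your identification of the $\pruned(\cdot)$ counter as the crux --- that it must never overstate the number of genuinely excluded centroids, guarded by parent inheritance and the no-double-counting rule in \texttt{Score()} --- is exactly the point the paper's proof turns on.
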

\vspace*{-1.7em}
\begin{proof}
(Sketch.)  First, we show that the dual-tree algorithm (line 9) produces correct
results for $\ub(\cdot)$, $\lb(\cdot)$, $\pruned(\cdot)$, and $\closest(\cdot)$
for every point and node.  Next, we show that \texttt{UpdateTree()} maintains
the correctness of those four quantities and only marks $\canchange(\cdot)$ to
\texttt{false} when the node or point truly cannot change owner.  Next, it is
easily shown that \texttt{CoalesceTree()} and \texttt{DecoalesceTree()} do not
affect the results of the dual-tree algorithm because the only nodes and points
removed are those where $\canchange(\cdot) = \mathtt{false}$.  Lastly, we show
that \texttt{UpdateCentroids()} produces centroids correctly.
\end{proof}
\vspace*{-1.0em}

Next, we consider the runtime of the algorithm.  Our results are with respect to
the {\it expansion constant} $c_k$ of the centroids \cite{langford2006}, which
is a measure of intrinsic dimension.  $c_{qk}$ is a related quantity:
the largest expansion constant of $C$ plus any point in the dataset.  Our
results also depend on the imbalance of the tree $i_t(\mathscr{T})$, which in
practice generally scales linearly in $N$ \cite{curtin2015plug}.  As with the
other theoretical results, more detail on each of these quantities is available
in the appendix.

\begin{thm}
When cover trees are used, a single iteration of dual-tree $k$-means as in
Algorithm \ref{alg:high_level} can be performed in $O(c_k^4 c_{qk}^5 (N +
i_t(\mathscr{T})) + c_k^9 k \log k)$ time.
\end{thm}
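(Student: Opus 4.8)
The plan is to split a single iteration into its constituent routines, bound each, and observe that the total is governed by the dual-tree traversal together with the construction and per-iteration processing of the centroid tree $\mathscr{T}_c$. The routines are \texttt{CoalesceNodes()} and \texttt{DecoalesceNodes()}, construction of $\mathscr{T}_c$ on the $k$ centroids (line 7), the dual-tree recursion (line 9), \texttt{UpdateCentroids()}, and \texttt{UpdateTree()}. Each coalesce/decoalesce pass, \texttt{UpdateCentroids()}, and the bound-propagation portion of \texttt{UpdateTree()} touches each node of the query tree $\mathscr{T}$ a constant number of times; since a cover tree on $N$ points has $O(N)$ nodes, each such pass is $O(N)$. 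Because coalescing only removes nodes, bounding these passes by the full non-coalesced tree is safe and can only overestimate.

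The central term is the cost of the dual-tree recursion, for which I would invoke the general cover-tree dual-tree runtime result of \citet{curtin2015plug}. That result bounds the number of visited $(\N_q, \N_r)$ combinations, and hence the number of \texttt{Score()} and \texttt{BaseCase()} evaluations, by $O(c_r^4 c_{qr}^5 (N_q + i_t(\mathscr{T}_q)))$ whenever \texttt{BaseCase()} and \texttt{Score()} each run in $O(1)$ time. Setting the reference set to the centroids (so $c_r = c_k$), the query set to the data (so $N_q = N$ and $\mathscr{T}_q = \mathscr{T}$), and $c_{qr} = c_{qk}$ by definition yields the $O(c_k^4 c_{qk}^5 (N + i_t(\mathscr{T})))$ term. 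To apply the theorem I must verify its constant-time hypothesis: treating each distance evaluation as unit cost (appropriate for an intrinsic-dimension analysis, where the bound counts distance computations), \texttt{BaseCase()} (Algorithm \ref{alg:base_case}) is manifestly $O(1)$, and every line of \texttt{Score()} (Algorithm \ref{alg:score}) is a constant-time comparison, bound update, or parent inheritance, with the sole exception of the $\pruned(\cdot)$ increment on line 15.

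The remaining pieces contribute the $c_k^9 k \log k$ term. Building $\mathscr{T}_c$ on $k$ points costs $O(c_k^{6} k \log k)$ by the cover tree construction bound of \citet{langford2006}. In addition, the second owner-change rule of Strategy 3 (Equation \ref{eqn:static-2}) needs, for every centroid $c_j$, the quantity $\min_{c_i \ne c_j} d(c_i, c_j)$; computing all of these is an all-nearest-neighbors query on $C$, which a cover-tree algorithm performs in $O(c_k^{O(1)} k)$ time, while the move distances $m_j$ and $\max_i m_i$ cost $O(k)$. All of this is dominated by $O(c_k^9 k \log k)$, with the inflated exponent absorbing these auxiliary computations and the initialization of $\pruned(\cdot)$, $\lb(\cdot)$, and $\closest(\cdot)$ on every node of $\mathscr{T}_c$.

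The main obstacle is the $\pruned(\cdot)$ bookkeeping on line 15, the one step of \texttt{Score()} that is not evidently $O(1)$: it increments by $|\mathscr{D}^p_r \setminus \{\text{clusters not pruned}\}|$, and for cover trees, which may hold a centroid in several nodes, the set difference is essential to avoid double-counting already-pruned centroids. I would precompute $|\mathscr{D}^p_r|$ for every reference node by a single bottom-up pass folded into the $O(c_k^6 k \log k)$ construction of $\mathscr{T}_c$. Correctness then rests on the observation that, for a fixed $\N_q$, the value $\pruned(\N_q)$ is nondecreasing and the set difference ensures each descendant centroid contributes at most once, so the final value is a faithful count capped at $k$; the \emph{time} per increment is $O(1)$, so the total line-15 cost is at most the number of \texttt{Score()} calls, already absorbed into the traversal term. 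Arguing that the set difference can itself be maintained in amortized $O(1)$ for cover trees is the genuinely delicate point, but it does not change the asymptotics. Summing the traversal term, the $O(N)$ passes, and the $O(c_k^9 k \log k)$ centroid-tree term then yields the claimed runtime.
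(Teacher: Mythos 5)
Your overall decomposition is the same as the paper's: $O(N)$ single passes for \texttt{CoalesceNodes()}, \texttt{DecoalesceNodes()}, \texttt{UpdateCentroids()}, and the tree-update portion of \texttt{UpdateTree()} (using the $O(N)$-node property of cover trees), $O(c_k^6 k \log k)$ for building $\mathscr{T}_c$, and an $O(c_k^{O(1)} k)$ nearest-centroid-to-centroid computation folded into the $c_k^9 k \log k$ term --- all of this matches. Your treatment of the $\pruned(\cdot)$ increment on line 15 of \texttt{Score()} is, if anything, more careful than the paper, which simply asserts that \texttt{Score()} runs in $O(1)$ time. The problem is the central term.

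The result you invoke from \cite{curtin2015plug} does not exist in the form you state. The bound $O(c_r^4 c_{qr}^5 (N + i_t(\mathscr{T}_q)))$ there is specific to \emph{nearest neighbor search} (it is restated as Theorem \ref{thm:nns} in this paper's appendix); the genuinely general plug-and-play theorem bounds the traversal by a quantity involving a problem-dependent factor (the maximum size of the set of reference nodes that survive pruning for any query node), which must be bounded separately for each algorithm using its particular pruning rules. Constant-time \texttt{BaseCase()} and \texttt{Score()} alone cannot yield your claimed bound: a \texttt{Score()} that never prunes runs in $O(1)$ per call, yet the traversal then visits $\Theta(kN)$ node combinations. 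What is missing is exactly the paper's key step (Theorem \ref{thm:dtkm}): the pruning rule of Equation \ref{eqn:prune}, $d_{\min}(\N_q, \N_r) > \ub(\N_q)$, is \emph{identical} to the nearest-neighbor-search pruning rule, and the remaining prunes (e.g., strategy two, when $\pruned(\N_q) = k-1$) only discard additional combinations; hence dual-tree $k$-means visits no more node combinations than dual-tree nearest neighbor search with query set $S$ and reference set $C$, and the NNS bound can be transplanted. One must also check the hypothesis of Theorem \ref{thm:nns} that the range of pairwise distances of $C$ is contained in the range of pairwise distances of $S$, which the paper verifies and your argument never addresses. With that comparison-to-NNS argument inserted, the rest of your proof goes through.
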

\vspace*{-1.0em}
\begin{proof}
(Sketch.)  Cover trees have $O(N)$ nodes \cite{langford2006}; because
\texttt{CoalesceTree()}, \texttt{DecoalesceTree()}, \texttt{UpdateCentroids()},
and \texttt{UpdateTree()} can be performed in one pass of the tree, these steps
may each be completed in $O(N)$ time.  Building a tree on the centroids takes
$O(c_k^6 k \log k)$ time, where $c_k$ is the expansion constant of the
centroids.  Recent results show that dual-tree algorithms that use the cover
tree may have their runtime easily bounded \cite{curtin2015plug}.  We may
observe that our pruning rules are at least as tight as nearest neighbor search;
this means that the dual-tree algorithm (line 11) may be performed in
$O(c_{kr}^9 (N + i_t(\mathscr{T})))$ time.  Also, we must perform nearest
neighbor search on the centroids, which costs $O(c_k^9 (k +
i_t(\mathscr{T_c})))$ time.  This gives a total per-iteration runtime of
$O(c_{kr}^9 (N + i_t(\mathscr{T})) + c_k^6 k \log k + c_k^9
i_t(\mathscr{T}_k))$.
\end{proof}

\vspace*{-1.0em}

This result holds intuitively.  By building a tree on the centroids, we are able
to prune many centroids at once, and as a result the amortized cost of finding
the nearest centroid to a point is $O(1)$.  This meshes with earlier theoretical
results \cite{langford2006, curtin2015plug, ram2009} and earlier empirical
results \cite{gray2003nonparametric, gray2001nbody} that suggest that an answer
can be obtained for a single query point in $O(1)$ time.  Note that this
worst-case bound depends on the intrinsic dimension (the expansion constant)
of the centroids, $c_k$, and the related quantity $c_{qk}$.  If the intrinsic
dimension of the centroids is low---that is, if the centroids are distributed
favorably---the dual-tree algorithm will be more efficient.

However, this bound is generally quite loose in practice.  First, runtime bounds
for cover trees are known to be loose \cite{curtin2015plug}.  Second, this
particular bound does not consider the effect of coalescing the tree.
In any given iteration, especially toward the end of the $k$-means
clustering, most points will have $\operatorname{canchange}(\cdot) =
\mathtt{false}$ and thus the coalesced tree
will be far smaller than the full tree built on all $N$ points.

\begin{thm}
Algorithm \ref{alg:high_level} uses no more than $O(N + k)$ memory when cover
trees are used.
\end{thm}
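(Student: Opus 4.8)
The plan is to account separately for every data structure the algorithm keeps alive, and to show that each one contributes either $O(N)$ or $O(k)$ memory. Following the accounting convention of Table \ref{tab:runtimes}, I treat the storage of a single point or centroid as $O(1)$. First I would bound the persistent query-side storage. The tree $\mathscr{T}$ is a cover tree built on the $N$ points, and cover trees have $O(N)$ nodes; each node $\N_q$ stores only the constant-size quantities $\ub(\N_q)$, $\lb(\N_q)$, $\pruned(\N_q)$, $\closest(\N_q)$, and $\canchange(\N_q)$, contributing $O(1)$ per node and hence $O(N)$ overall. The per-point quantities $\ub(p_q)$, $\lb(p_q)$, $\closest(p_q)$, and $\canchange(p_q)$ are likewise $O(1)$ each, adding a further $O(N)$. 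Crucially, Strategy four keeps these bounds alive across iterations rather than reallocating them each time, so no term accumulates with the iteration count.

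Next I would bound the reference-side and transient storage. The centroid tree $\mathscr{T}_c$ is a cover tree on $k$ centroids and so has $O(k)$ nodes, again with $O(1)$ auxiliary data per node; it is rebuilt every iteration, but only one copy is ever held, so this costs $O(k)$. The accumulator arrays $C$ and $n$ in \texttt{UpdateCentroids()} (Algorithm \ref{alg:update_centroids}) have length $k$, giving $O(k)$, and the explicit stack $s$ used by its depth-first traversal never holds more than the number of tree nodes, so it is $O(N)$.

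The step I expect to require the most care is bounding the memory of the dual-tree recursion on line 9 together with the \texttt{CoalesceTree()}/\texttt{DecoalesceTree()} bookkeeping, since a careless traversal could in principle retain $\Theta(Nk)$ node combinations at once. The argument I would give is that a depth-first pruning dual-tree traversal (a valid choice, since the algorithm is tree-independent) needs only the active root-to-leaf path of query/reference node combinations, whose size is bounded by the sum of the two tree depths and therefore by $O(N + k)$; no frontier of unrelated combinations is stored. For coalescing, \texttt{CoalesceTree()} removes the nodes with $\canchange(\cdot) = \mathtt{false}$ and \texttt{DecoalesceTree()} must restore them, so the removed nodes are retained; but these form a subset of the $O(N)$ nodes of $\mathscr{T}$, so the bookkeeping is still $O(N)$. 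Summing the query-side $O(N)$, the reference-side $O(k)$, and the $O(N + k)$ traversal and coalescing overhead yields a total of $O(N + k)$, as claimed.
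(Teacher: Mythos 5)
Your proof is correct and takes essentially the same approach as the paper's (much terser) proof, which simply notes that the cover trees with their associated bounds take $O(N)$ and $O(k)$ space and that the dataset and centroids take $O(N)$ and $O(k)$ space. You go further than the paper by also bounding the dual-tree traversal stack (via a depth-first traversal) and the \texttt{CoalesceTree()}/\texttt{DecoalesceTree()} bookkeeping, which the paper's proof silently omits; this extra care only strengthens the argument and introduces no gap.
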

\vspace*{-1.0em}
\begin{proof}
This proof is straightforward.  A cover tree on $N$ points takes $O(N)$
space.  So the trees and associated bounds take $O(N)$ and $O(k)$ space.  Also,
the dataset and centroids take $O(N)$ and $O(k)$ space.
\end{proof}

\vspace*{-1.3em}
\section{Experiments}
\label{sec:empirical}
\vspace*{-0.3em}

\setlength{\textfloatsep}{1.2em}
\begin{table}[t!]
{\small
\begin{center}
\begin{tabular}{|c|c|c|c|c|}
\hline
 & & & \multicolumn{2}{|c|}{\bf tree build time} \\
{\bf Dataset} & $N$ & $d$ & $kd$-tree & cover tree \\
\hline
cloud & 2048 & 10 & 0.001s & 0.005s \\
cup98b & 95413 & 56 & 1.640s & 32.41s \\
birch3 & 100000 & 2 & 0.037s & 2.125s \\
phy & 150000 & 78 & 4.138s & 22.99s \\
power & 2075259 & 7 & 7.342s & 1388s \\
lcdm & 6000000 & 3 & 4.345s & 6214s \\
\hline
\end{tabular}
\end{center}
}
\vspace*{-1.0em}
\caption{Dataset information.}
\label{tab:datasets}
\end{table}

\begin{table*}
\begin{center}
\resizebox{\textwidth}{!}{
\begin{tabular}{|c|c|r|l|l|l|l|l|l|}
\hline
 & & & \multicolumn{6}{|c|}{\bf avg. per-iteration runtime (distance
calculations)} \\
{\bf dataset} & $k$ & {\bf iter.} & {\tt elkan}                 & {\tt hamerly}           & {\tt yinyang}               & {\tt blacklist}             & {\tt dualtree-kd}           & {\tt dualtree-ct}
\\
\hline
cloud         & 3   & 8           & 1.50e-4s (867)              & 1.11e-4s (1.01k)        & 1.11e-1s (2.00k)             & {\bf 4.68e-5s} (302)        & 1.27e-4s ({\bf 278})        & 2.77e-4s (443)    \\
cloud         & 10  & 14          & 2.09e-4s ({\bf 1.52k})      & 1.92e-4s (4.32k)        & 7.66e-2s (9.55k)             & {\bf 1.55e-4s} (2.02k)      & 3.69e-4s (1.72k)            & 5.36e-4s (2.90k)  \\
cloud         & 50  & 19          & 5.87e-4s ({\bf 2.57k})      & {\bf 5.30e-4s} (21.8k)  & 9.66e-3s (15.6k)            & 8.20e-4s (12.6k)            & 1.23e-3s (5.02k)            & 1.09e-3s (9.84k)  \\
\hline
cup98b        & 50  & 224         & 0.0445s ({\bf 25.9k})       & 0.0557s (962k)          & 0.0465s (313k)              & {\bf 0.0409s} (277k)        & 0.0955s (254k)              & 0.1089s (436k) \\
cup98b        & 250 & 168         & 0.1972s ({\bf 96.8k})       & 0.4448s (8.40M)         & {\bf 0.1417s} (898k)        & 0.2033s (1.36M)             & 0.4585s (1.38M)             & 0.3237s (2.73M)   \\
cup98b        & 750 & 116         & 1.1719s ({\bf 373k})        & 1.8778s (36.2M)         & {\bf 0.2653s} (1.26M)       & 0.6365s (4.11M)             & 1.2847s (4.16M)             & 0.8056s (81.4M)   \\
\hline
birch3        & 50  & 129         & 0.0194s ({\bf 24.2k})       & 0.0093s (566k)          & 0.0378s (399k)              & {\bf 0.0030s} (42.7k)        & 0.0082s (37.4k)             & 0.0378s (67.9k) \\
birch3        & 250 & 812         & 0.0895s ({\bf 42.8k})       & 0.0314s (2.59M)         & 0.0711s (239k)              & {\bf 0.0164s} (165k)         & 0.0183s (79.7k)             & 0.0485s (140k)    \\
birch3        & 750 & 373         & 0.3253s (292k)              & 0.0972s (8.58M)         & 0.1423s (476k)              & 0.0554s (450k)               & {\bf 0.02989s} ({\bf 126k}) & 0.0581s (235k)    \\
\hline
phy           & 50  & 34          & 0.0668s (82.3k)             & 0.1064s (1.38M)         & 0.1072s (808k)              & {\bf 0.0081s} ({\bf 33.0k})  & 0.02689s (67.8k)            & 0.0945s (188k)    \\
phy           & 250 & 38          & 0.1627s (121k)              & 0.4634s (6.83M)         & 0.2469s (2.39M)             & {\bf 0.0249s} (104k)         & 0.0398s ({\bf 90.4k})       & 0.1023s (168k)    \\
phy           & 750 & 35          & 0.7760s ({\bf 410k})        & 2.9192s (43.8M)         & 0.6418s (5.61M)             & {\bf 0.2478s} (1.19M)        & 0.2939s (1.10M)             & 0.3330s (1.84M)   \\
\hline
power         & 25  & 4           & 0.3872s (2.98M)             & 0.2880s (12.9M)         & 1.1257s (33.5M)             & {\bf 0.0301s} (216k)         & 0.0950s ({\bf 87.4k})       & 0.6658s (179k)    \\
power         & 250 & 101         & 2.6532s (425k)              & 0.1868s (7.83M)         & 1.2684s (10.3M)             & 0.1504s (1.13M)              & {\bf 0.1354s} ({\bf 192k})  & 0.6405s (263k)    \\
power         & 1000& 870         & {\it out of memory}         & 6.2407s (389M)          & 4.4261s (9.41M)             & 0.6657s (2.98M)              & {\bf 0.4115s} ({\bf 1.57M}) & 1.1799s (4.81M) \\
power         & 5000& 504         & {\it out of memory}         & 29.816s (1.87B)         & 22.7550s (58.6M)            & 4.1597s (11.7M)              & {\bf 1.0580s} ({\bf 3.85M}) & 1.7070s (12.3M)   \\
power        & 15000& 301         & {\it out of memory}         & 111.74s (6.99B)         & {\it out of memory}         & {\it out of memory}          & {\bf 2.3708s} ({\bf 8.65M}) & 2.9472s (30.9M)   \\
\hline
lcdm          & 500 & 507         & {\it out of memory}         & 6.4084s (536M)          & 8.8926s (44.5M)             & 0.9347s (4.20M)              & {\bf 0.7574s} ({\bf 3.68M}) & 2.9428s (7.03M) \\
lcdm          & 1000& 537         & {\it out of memory}         & 16.071s (1.31B)         & 18.004s (74.7M)             & 2.0345s (5.93M)              & {\bf 0.9827s} ({\bf 5.11M}) & 3.3482s (10.0M)   \\
lcdm          & 5000& 218         & {\it out of memory}         & 64.895s (5.38B)         & {\it out of memory}         & 12.909s (16.2M)              & {\bf 1.8972s} ({\bf 8.54M}) & 3.9110s (19.0M)   \\
lcdm          &20000& 108         & {\it out of memory}         & 298.55s (24.7B)         & {\it out of memory}         & {\it out of memory}          & {\bf 4.1911s} ({\bf 17.8M}) & 5.5771s (43.2M)   \\
\hline
\end{tabular}
}
\end{center}
\vspace*{-1.0em}
\caption{Empirical results for $k$-means.}
\label{tab:runtime}
\vspace*{-1.0em}
\end{table*}

The next thing to consider is the empirical performance of the algorithm.  We
use the publicly available \texttt{kmeans} program in {\bf mlpack}
\cite{mlpack2013}; in our experiments, we run it as follows:

\vspace*{-0.5em}
\begin{verbatim}
$ kmeans -i dataset.csv -I centroids.csv -c
    $k -v -e -a $algorithm
\end{verbatim}
\vspace*{-0.5em}

\noindent where \texttt{\$k} is the number of clusters and \texttt{\$algorithm}
is the algorithm to be used.  Each algorithm is implemented in C++.  For the
{\tt yinyang} algorithm, we use the authors' implementation.  We use a variety
of $k$ values on mostly real-world datasets; details are shown in Table
\ref{tab:datasets} \cite{uci, birch3, lcdm}.  The table also contains the time
taken to build a $kd$-tree (for \texttt{blacklist} and \texttt{dualtree-kd}) and
a cover tree (for \texttt{dualtree-ct}).  Cover trees are far more complex to
build than $kd$-trees; this explains the long cover tree build time.  Even so,
the tree only needs to be built once during the $k$-means run.  If results are
required for multiple values of $k$---such as in the X-means algorithm
\cite{pelleg2000x}---then the tree built on the points may be re-used.

Clusters were initialized using the Bradley-Fayyad refined start procedure
\yrcite{bradley1998refining}; however, this was too slow for the very large
datasets, so in those cases points were randomly sampled as the initial
centroids.  $k$-means was then run until convergence on each dataset.  These
simulations were performed on a modest consumer desktop with an
Intel i5 with 16GB RAM, using {\bf mlpack}'s benchmarking system
\cite{edel2014automatic}.

Average runtime per iteration results are shown in Table \ref{tab:runtime}.
The amount of work that is being pruned away is somewhat unclear from the
runtime results, because the \texttt{elkan} and \texttt{hamerly} algorithms
access points linearly and thus benefit from cache effects; this is not true of
the tree-based algorithms.  Therefore, the average number of distance
calculations per iteration are also included in the results.

It is immediately clear that for large datasets, \texttt{dualtree-kd} is
fastest, and \texttt{dualtree-ct} is almost as fast.
The \texttt{elkan} algorithm, because it
holds $kN$ bounds, is able to prune away a huge amount of work and is very fast
for small datasets; however,
maintaining all of these bounds becomes prohibitive with large $k$ and the
algorithm exhausts all
available memory.  The \texttt{blacklist} algorithm has the same issue: on the
largest datasets, with the largest $k$ values, the space required to maintain
all the blacklists is too much.  This is also true of the \texttt{yinyang}
algorithm, which must maintain bounds between each point and each group of
centroids.  For large $k$, this burden becomes too much and the algorithm fails.
The \texttt{hamerly} and dual-tree algorithms, on the other hand, are the
best-behaved with memory usage and do not have any issues with large $N$ or
large $k$; however, the \texttt{hamerly} algorithm is very slow on large
datasets because it is not able to prune many points at once.

Similar to the observations about the \texttt{blacklist} algorithm, the
tree-based approaches are less effective in higher dimensions
\cite{pelleg1999accelerating}.  This is an important point: the performance of
tree-based approaches suffer in high dimensions in part because the bound 
$d_{\min}(\cdot, \cdot)$ generally becomes looser as dimension increases.
This is partly because the volume of nodes in high dimensions is much higher;
consider that a ball has volume that is exponential in the dimension.

Even so, in our results, we see speedup in reasonable dimensions (for example,
the {\tt phy} dataset has 78 dimensions).  Further, because our algorithm is
tree-independent, we may use tree structures that are tailored to
high-dimensional data \cite{arya1998optimal}---including ones that
have not yet been developed.  From our results we believe
as a rule of thumb that the dual-tree $k$-means algorithm can be effective up to
a hundred dimensions or more.

Another clear observation is that when $k$ is scaled on a single dataset, the
\texttt{dualtree-kd} and \texttt{dualtree-ct} algorithms nearly always scale
better (in terms of runtime) than the other algorithms.  These results show that
our algorithm satisfies its original goals: to be able to scale effectively to
large $k$ and $N$.

\vspace*{-0.6em}
\section{Conclusion and future directions}
\label{sec:conclusion}
\vspace*{-0.2em}

Using four pruning strategies, we have developed a flexible,
tree-independent dual-tree $k$-means algorithm that is the best-performing
algorithm for large datasets and large $k$ in small-to-medium dimensions.  It
is theoretically favorable, has a small memory footprint, and may be used in
conjunction with initial point selection and approximation schemes for
 additional speedup.

There are still interesting future directions to pursue, though.  The first
direction is parallelism: because our dual-tree algorithm is agnostic to the
type of traversal used, we may use a parallel traversal \cite{curtin2013tree},
such as an adapted version of a recent parallel dual-tree algorithm
\cite{lee2012distributed}.  The second direction is kernel $k$-means and other
spectral clustering techniques: our algorithm may be merged with the
ideas of \citet{curtin2014dual} to perform kernel $k$-means.  The third
direction is theoretical.  Recently, more general notions of intrinsic
dimensionality have been proposed \cite{houle2013dimensionality,
amsaleg2015estimating}; these may enable tighter and more descriptive runtime
bounds.  Our work thus provides a useful and fast $k$-means algorithm and also
opens promising avenues to further accelerated clustering algorithms.

\nocite{ram2009rank}
\nocite{march2010euclidean}

\bibliographystyle{icml2016}
\bibliography{kmeans}

\appendix

\section{Supplementary material}

Unfortunately, space constraints prevent adequate explanation of each of the
points in the main paper.  This supplementary material is meant to clarify all
of the parts of the dual-tree $k$-means algorithm that space did not permit in
the main paper.

\subsection{Updating the tree}

In addition to updating the centroids, the bounding information contained within
the tree must be updated according to pruning strategies 3 and 4.
Unfortunately, this yields a particularly complex recursive algorithm, given in
Algorithm \ref{alg:update_tree}.

\begin{algorithm*}
\begin{algorithmic}[1]
  \STATE {\bf Input:} node $\N_i$, $\ub(\cdot)$, $\lb(\cdot)$,
$\pruned(\cdot)$, $\closest(\cdot)$, $\canchange(\cdot)$, centroid movements $m$
  \STATE {\bf Output:} updated $\ub(\cdot)$, $\lb(\cdot)$, $\pruned(\cdot)$,
$\canchange(\cdot)$
  \medskip
  \STATE $\canchange(\N_i) \gets \mathtt{true}$
  \IF{$\N_i$ has a parent and $\canchange(\parent(\N_i)) = \mathtt{false}$}
    \STATE \COMMENT{Use the parent's bounds.}
    \STATE $\closest(\N_i) \gets \closest(\parent(\N_i))$
    \STATE $j \gets \mathrm{index\ of } \closest(\N_i)$
    \STATE $\ub(\N_i) \gets \ub(\N_i) + m_j$
    \STATE $\lb(\N_i) \gets \lb(\N_i) + \max_i m_i$
    \STATE $\canchange(\N_i) \gets \mathtt{false}$
  \ELSIF{$\pruned(\N_i) = k - 1$}
    \STATE \COMMENT{$\N_i$ is owned by a single cluster.  Can that owner change
next iteration?}
    \STATE $j \gets \mathrm{index\ of } \closest(\N_i)$
    \STATE $\ub(\N_i) \gets \ub(\N_i) + m_j$
    \STATE $\lb(\N_i) \gets \max \left(\lb(\N_i) - \max_i m_i, \min_{k \ne j}
d(c_k, c_j) / 2 \right)$
    \IF{$\ub(\N_i) < \lb(\N_i)$}
      \STATE \COMMENT{The owner cannot change next iteration.}
      \STATE $\canchange(\N_i) \gets \mathtt{false}$
    \ELSE
      \STATE \COMMENT{Tighten the upper bound and try to prune again.}
      \STATE $\ub(\N_i) \gets \min \left(\ub(\N_i), d_{\max}(\N_i, c_j)\right)$
      \STATE {\bf if} $\ub(\N_i) < \lb(\N_i)$ {\bf then} $\canchange(\N_i) \gets
\mathtt{false}$
    \ENDIF
  \ELSE
    \STATE $j \gets \mathrm{index\ of } \closest(\N_i)$
    \STATE $\ub(\N_i) \gets \ub(\N_i) + m_j$
    \STATE $\lb(\N_i) \gets \lb(\N_i) - \max_k m_k$
  \ENDIF
  \STATE \COMMENT{Recurse into each child.}
  \STATE{{\bf for each} child $\N_c$ of $\N_i${\bf , call}
\texttt{UpdateTree($\N_c$)}}
  \STATE \COMMENT{Try to determine points whose owner cannot change if $\N_i$
can change owners.}
  \IF{$\canchange(\N_i) = \mathtt{true}$}
    \FOR{$p_i \in \mathscr{P}_i$}
      \STATE $j \gets \mathrm{index\ of } \closest(p_i)$
      \STATE $\ub(p_i) \gets \ub(p_i) + m_j$
      \STATE $\lb(p_i) \gets \min\left( \lb(p_i) - \max_k m_k, \min_{k \ne j}
d(c_k, c_j) / 2 \right)$
      \IF{$\ub(p_i) < \lb(p_i)$}
        \STATE $\canchange(p_i) \gets \mathtt{false}$
      \ELSE
        \STATE \COMMENT{Tighten the upper bound and try again.}
        \STATE $\ub(p_i) \gets \min\left( \ub(p_i), d(p_i, c_j) \right)$
        \IF{$\ub(p_i) < \lb(p_i)$} 
          \STATE $\canchange(p_i) \gets \mathtt{false}$
        \ELSE
          \STATE \COMMENT{Point cannot be pruned.}
          \STATE $\ub(p_i) \gets \infty$
          \STATE $\lb(p_i) \gets \infty$
        \ENDIF
      \ENDIF
    \ENDFOR
  \ELSE
    \FOR{$p_i \in \mathscr{P}_i$ where $\canchange(p_i) = \mathtt{false}$}
      \STATE \COMMENT{Maintain upper and lower bounds for points whose owner
cannot change.}
      \STATE $j \gets \mathrm{index\ of } \closest(p_i)$
      \STATE $\ub(p_i) \gets \ub(p_i) + m_j$
      \STATE $\lb(p_i) \gets \lb(p_i) - \max_k m_k$
    \ENDFOR
  \ENDIF
  \IF{$\canchange(\cdot) = \mathtt{false}$ for all children $\N_c$ of
$\N_i$ and all points $p_i \in \mathscr{P}_i$}
    \STATE $\canchange(\N_i) \gets \mathtt{false}$
  \ENDIF
  \IF{$\canchange(\N_i) = \mathtt{true}$}
    \STATE $\pruned(\N_i) \gets 0$
  \ENDIF
\end{algorithmic}
\caption{\texttt{UpdateTree()} for dual-tree $k$-means.}
\label{alg:update_tree}
\end{algorithm*}

The first if statement (lines 4--10) catches the case where the parent cannot
change owner next iteration; in this case, the parent's upper bound and lower
bound can be taken as valid bounds.  In addition, the upper and lower bounds are
adjusted to account for cluster movement between iterations, so that the bounds
are valid for next iteration.

If the node $\mathscr{N}_i$ has an owner, the algorithm then attempts to use the
pruning rules established in Equations 4 and 6 in the main paper,
to determine if the owner of $\mathscr{N}_i$ can change next iteration.  If not,
$\canchange(\N_i)$ is set to \texttt{false} (line 18).  On the other hand, if
the pruning check fails, the upper bound is tightened and the pruning check is
performed a second time.  It is worth noting that $d_{\max}(\mathscr{N}_i, c_j)$
may not actually be less than the current value of $\ub(\N_i)$, which is why the
$\min$ is necessary.

After recursing into the children of $\N_i$, if $\N_i$ could have an owner
change, each point is individually checked using the same approach (lines
31--45).  However, there is a slight difference: if a point's owner can change,
the upper and lower bounds must be set to $\infty$ (lines 44--45).  This is only
necessary with points; \texttt{BaseCase()} does not take bounding information
from previous iterations into account, because no work can be avoided in that
way.

Then, we may set $\canchange(\N_i)$ to \texttt{false} if every point in $\N_i$
and every child of $\N_i$ cannot change owners (and the points and nodes do not
necessarily have to have the same owner).  Otherwise, we must set
$\pruned(\N_i)$ to $0$ for the next iteration.

\subsection{Coalescing the tree}

\begin{algorithm}[t!]
\begin{algorithmic}[1]
  \STATE {\bf Input:} tree $\mathscr{T}$
  \STATE {\bf Output:} coalesced tree $\mathscr{T}$
  \medskip
  \STATE \COMMENT{A depth-first recursion to hide nodes where
$\canchange(\cdot)$ is \texttt{false}.}
  \STATE $s \gets \{ \operatorname{root}(\mathscr{T}) \}$
  \WHILE{$|s| > 0$}
    \STATE $\N_i \gets s\mathtt{.pop()}$
    \medskip
    \STATE \COMMENT{Special handling is required for leaf nodes and the root
node.}
    \IF{$|\mathscr{C}_i| = 0$}
      \STATE {\bf continue}
    \ELSIF{$\N_i$ is the root node}
      \FOR{$\N_c \in \mathscr{C}_i$}
        \STATE $s\mathtt{.push(}\N_c\mathtt{)}$
      \ENDFOR
    \ENDIF
    \medskip
    \STATE \COMMENT{See if children can be removed.}
    \FOR{$\N_c \in \mathscr{C}_i$}
      \IF{$\canchange(\N_c) = \mathtt{false}$}
        \STATE remove child $\N_c$
      \ELSE
        \STATE $s\mathtt{.push(}\N_c\mathtt{)}$
      \ENDIF
    \ENDFOR
    \medskip
    \STATE \COMMENT{If only one child is left, then this node is unnecessary.}
    \IF{$|\mathscr{C}_i| = 1$}
      \STATE add child to $\parent(\N_i)$
      \STATE remove $\N_i$ from $\parent(\N_i)$'s children
    \ENDIF
  \ENDWHILE
  \medskip
  \STATE {\bf return} $\mathscr{T}$
\end{algorithmic}
\caption{\texttt{CoalesceTree()} for dual-tree $k$-means.}
\label{alg:coalesce}
\end{algorithm}

After \texttt{UpdateTree()} is called, the tree must be coalesced to remove any
nodes where $\canchange(\cdot) = \mathtt{false}$.  This can be accomplished via
a single pass over the tree.  A simple implementation is given in Algorithm
\ref{alg:coalesce}.  \texttt{DecoalesceTree()} may be implemented by
simply restoring a pristine copy of the tree which was cached right before
\texttt{CoalesceTree()} is called.

\subsection{Correctness proof}

As mentioned in the main document, a correctness proof is possible but
difficult.  We will individually prove the correctness of various pieces of the
dual-tree $k$-means algorithm, and then we will prove the main correctness
result.  For precision, we must introduce the exact definition of a space tree
and a pruning dual-tree traversal, as given by Curtin et~al.
\cite{curtin2013tree}.

\begin{defn}
A \textbf{space tree} on a dataset $S \in \Re^{N \times D}$ is an undirected,
connected, acyclic, rooted simple graph with the following properties:

\vspace*{-0.5em}
\begin{itemize}
  \item Each \textit{node} (or vertex), holds a number of points (possibly
zero) and is connected to one parent node and a number of child nodes (possibly
zero).
  \item There is one node in every space tree with no parent; this
is the \textit{root node} of the tree.
  \item Each point in $S$ is contained in at least one node.
  \item Each node $\mathscr{N}$ has a convex subset of
$\Re^{D}$ containing each point in that node and also the convex
subsets represented by each child of the node.
\end{itemize}
\end{defn}

\begin{defn}
\label{def:dtpt}
A {\it pruning dual-tree traversal} is a process that, given two space trees
$\mathscr{T}_q$ (the query tree, built on the query set $S_q$) and
$\mathscr{T}_r$ (the reference tree, built on the reference set $S_r$), will
visit combinations of nodes $(\mathscr{N}_q, \mathscr{N}_r)$ such that
$\mathscr{N}_q \in \mathscr{T}_q$ and $\mathscr{N}_r \in \mathscr{T}_r$ no more
than once, and call a function \texttt{Score($\mathscr{N}_q$, $\mathscr{N}_r$)}
to assign a score to that node.  If the score is $\infty$ (or above some bound),
the combination is pruned and no combinations ($\mathscr{N}_{qc}$,
$\mathscr{N}_{rc}$) such that $\mathscr{N}_{qc} \in \mathscr{D}^n_q$ and
$\mathscr{N}_{rc} \in \mathscr{D}^n_r$ are visited.  Otherwise, for every
combination of points ($p_q$, $p_r$) such that $p_q \in \mathscr{P}_q$ and $p_r
\in \mathscr{P}_r$, a function \texttt{BaseCase($p_q$, $p_r$)} is called.  If no
node combinations are pruned during the traversal, \texttt{BaseCase($p_q$,
$p_r$)} is called at least once on each combination of $p_q \in S_q$ and $p_r
\in S_r$.
\end{defn}

For more description and clarity on these definitions, refer to
\cite{curtin2013tree}.

\begin{lemma}
A pruning dual-tree traversal which uses \texttt{BaseCase()} as given in
Algorithm 2 in the main paper and \texttt{Score()} as given in Algorithm
3 in the main paper which starts with valid $\ub(\cdot)$, $\lb(\cdot)$,
$\pruned(\cdot)$, and $\closest(\cdot)$ for each node
$\N_i \in \mathscr{T}$, and $\ub(p_q) = \lb(p_q) = \infty$ for each point $p_q
\in S$, will satisfy the following conditions upon completion:

\vspace*{-0.4em}
\begin{itemize}
  \item For every $p_q \in S$ that is a descendant of a node $\N_i$ that has
been pruned ($\pruned(\N_i) = k - 1$), $\ub(\N_i)$ is an upper bound on the
distance between $p_q$ and its closest centroid, and $\closest(\N_i)$ is the
owner of $p_q$.

  \item For every $p_q \in S$ that is not a descendant of any node that has been
pruned, $\ub(p_q)$ is an upper bound on the distance between $p_q$ and its
closest centroid, and $\closest(p_q)$ is the owner of $p_q$.

  \item For every $p_q \in S$ that is a descendant of a node $\N_i$ that has
been pruned ($\pruned(\N_i) = k - 1$), $\lb(\N_i)$ is a lower bound on the
distance between $p_q$ and its second closest centroid.

  \item For every $p_q \in S$ that is not a descendant of any node that has been
pruned, $\min(\lb(p_q), \lb(\N_q))$ where $\N_q$ is a node such that $p_q \in
\mathscr{P}_q$ is a lower bound on the distance between $p_q$ and its second
closest centroid.
\end{itemize}
\label{lem:dt_correct}
\vspace*{-0.4em}
\end{lemma}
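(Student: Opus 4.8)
The plan is to prove the four conditions by combining a single \emph{safety invariant} for the upper bounds with the coverage guarantee of Definition~\ref{def:dtpt}. First I would show that throughout the traversal every $\ub(\cdot)$ remains a genuine upper bound on the distance from the relevant point (or every descendant point) to its nearest centroid: \texttt{BaseCase()} only ever lowers $\ub(p_q)$ to a distance $d(p_q, c_r)$ that is actually realized, and line~19 of \texttt{Score()} sets $\ub(\N_q) \gets d_{\max}(\N_q, \N_r)$, which is realized by the chosen descendant centroid $c$ since $d(p_q, c) \le d_{\max}(\N_q, c) \le d_{\max}(\N_q, \N_r)$ for every $p_q \in \mathscr{D}^p_q$. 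Given this, any prune performed on line~16 because $d_{\min}(\N_q, \N_r) > \ub(\N_q)$ is \emph{safe}: every descendant centroid $c_r$ of $\N_r$ satisfies $d(p_q, c_r) \ge d_{\min}(\N_q, \N_r) > \ub(\N_q) \ge d(p_q, c_q)$, so $c_r$ owns no descendant point of $\N_q$. Pruning therefore never discards a true owner.

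For conditions~1 and~2 I would invoke Definition~\ref{def:dtpt}: each pair $(p_q, c)$ is either passed to \texttt{BaseCase()} or lies inside some pruned node combination. Fix $p_q$ and its owner $c_q$. If $p_q$ is a descendant of no pruned node, the pair $(p_q, c_q)$ cannot be pruned---a line-16 prune would contradict safety, and a $\pruned(\N_q) = k-1$ prune would place $p_q$ under a pruned node---so \texttt{BaseCase}$(p_q, c_q)$ runs; since $c_q$ is globally nearest, the final $\closest(p_q) = c_q$ and $\ub(p_q) = d(p_q, c_q)$, giving condition~2. If instead $p_q$ descends from a node $\N_i$ with $\pruned(\N_i) = k-1$, I would show $\pruned(\N_i)$ never \emph{over}counts: each centroid charged to it (directly on line~15, or inherited from a parent on lines~4--6) is a genuine non-owner by safety, and the set-difference on line~15 together with the parent inheritance guarantees each is counted exactly once even when points lie in several nodes. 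Hence exactly one centroid survives, it must be the owner, and---as argued below---$\closest(\N_i)$ records precisely that survivor, yielding condition~1.

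Conditions~3 and~4 follow the same dichotomy applied to the \emph{second} closest centroid $c_2$. If $c_2$ is visited by \texttt{BaseCase()}, the bound-shifting on lines~4--5 and the \texttt{ELSIF} on lines~7--8 ensure $\lb(p_q) \le d(p_q, c_2)$. If $c_2$ is pruned at a combination $(\N_q, \N_r)$, then lines~12--14 set $\lb(\N_q) \le d_{\min}(\N_q, \N_r) \le d(p_q, c_2)$, and the inheritance on line~6 ensures the node bound also reflects centroids pruned at ancestors. For a pruned node (condition~3) every non-owner---in particular $c_2$---is pruned, so $\lb(\N_i) \le d(p_q, c_2)$; for an unpruned point (condition~4), whichever of the two cases applies is captured by $\min(\lb(p_q), \lb(\N_q))$.

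The main obstacle is the node-level bookkeeping: proving that $\pruned(\N_i)$ is exact and that the \emph{final} value of $\closest(\N_i)$ is the surviving owner rather than some stale centroid. The decisive sub-argument is that within a traversal $\ub(\N_q)$ decreases only on line~19, which simultaneously sets $\closest(\N_q)$ to a centroid $c^\ast$ with $d_{\max}(\N_q, c^\ast) \le \ub(\N_q)$; hence $d_{\min}(\N_q, \N_r) \le d_{\max}(\N_q, c^\ast) \le \ub(\N_q)$ for any reference node $\N_r \ni c^\ast$, so no such $\N_r$ can ever satisfy the line-16 pruning test while $\closest(\N_q) = c^\ast$, and $\closest(\N_q)$ is therefore never itself a pruned centroid. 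Once $k-1$ centroids are pruned, the lone survivor must then be the recorded $\closest(\N_i)$. Making this fully rigorous requires tracking the interaction between the two pruning branches, the parent inheritance of $\pruned(\cdot)$ and $\lb(\cdot)$ (including the order in which combinations are scored), and the convexity requirement of the space-tree definition that justifies the geometric $d_{\min}$/$d_{\max}$ inequalities under multiply-held points---this is where the bulk of the careful case analysis lives.
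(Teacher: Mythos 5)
Your proposal follows essentially the same route as the paper's own proof: establish that $\ub(\cdot)$ always remains a valid upper bound so that every prune at line 16 is safe, invoke the definition of the pruning dual-tree traversal to guarantee \texttt{BaseCase()} is called on every unpruned point--centroid pair (giving conditions 1--2), argue that $\pruned(\cdot)$ counts each pruned centroid exactly once so that $\pruned(\N_i) = k-1$ forces the lone survivor to be the owner, and bound $\lb(\cdot)$ by the $d_{\min}$ values of pruned reference nodes together with parent inheritance (giving conditions 3--4). If anything, your ``decisive sub-argument'' --- that whenever $\closest(\N_q)$ is set, the invariant $d_{\max}(\N_q, \closest(\N_q)) \le \ub(\N_q)$ prevents any reference node containing $\closest(\N_q)$ from passing the prune test, so the recorded centroid is never itself pruned --- is more explicit than the paper, which at the corresponding step simply asserts that the unpruned centroid must be $\closest(\N_q)$ ``regardless of if $\closest(\N_q)$ was set during this traversal or still holds its initial value.''
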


\begin{proof}
It is easiest to consider each condition individually.  Thus, we will first
consider the upper bound on the distance to the closest cluster centroid.
Consider some $p_q$ and suppose that the closest cluster centroid to $p_q$ is
$c^*$.

Now, suppose first that the point $p_q$ is a descendant point of a node $\N_q$
that has been pruned.  We must show, then, that $c^*$ is $\closest(\N_q)$.  Take
$R = \{ \N_{r0}, \N_{r1}, \ldots, \N_{rj} \}$ to be the set of reference nodes
visited during the traversal with $\N_q$ as a query node; that is, the
combinations $(\N_q, \N_{ri})$ were visited for all $\N_{ri} \in R$.  Any
$\N_{ri}$ is pruned only if

\vspace*{-0.8em}
\begin{equation}
d_{\min}(\N_q, \N_{ri}) > \ub(\N_i)
\end{equation}
\vspace*{-0.8em}

\noindent according to line 10 of \texttt{Score()}.  Thus, as
long as $\ub(\N_i)$ is a valid upper bound on the closest cluster distance for
every descendant point in $\N_q$, then no nodes are incorrectly pruned.  It is
easy to see that the upper bound is valid: initially, it is valid by assumption;
each time the bound is updated with some node $\N_{ri}$ (on lines
19 and 20), it is set to
$d_{\max}(\N_i, c)$ where $c$ is some descendant centroid of $\N_{ri}$.  This is
clearly a valid upper bound, since $c$ cannot be any closer to any descendant
point of $\N_i$ than $c^*$.  We may thus conclude that no node is incorrectly
pruned from $R$; we may apply this reasoning recursively to the $\N_q$'s
ancestors to see that no reference node is incorrectly pruned.

When a node is pruned from $R$, the number of pruned clusters for $\N_q$ is
updated: the count of all clusters not previously pruned by $\N_q$ (or its
ancestors) is added.  We cannot double-count the pruning of a cluster; thus the
only way that $\pruned(\N_q)$ can be equal to $k - 1$ is if every centroid
except one is pruned.  The centroid which is not pruned will be the nearest
centroid $c^*$, regardless of if $\closest(\N_q)$ was set during this traversal
or still holds its initial value, and therefore it must be true that $\ub(\N_q)$
is an upper bound on the distance between $p_q$ and $c^*$, and $\closest(\N_q) =
c^*$.

This allows us to finally conclude that if $p_q$ is a descendant of a node
$\N_q$ that has been pruned, then $\ub(\N_q)$ contains a valid upper bound on
the distance between $p_q$ and its closest cluster centroid, and
$\closest(\N_q)$ is that closest cluster centroid.

Now, consider the other case, where $p_q$ is not a descendant of any node that
has been pruned.  Take $\N_i$ to be any node containing $p_q$\footnote{Note that
the meaning here is not that $p_q$ is a descendant of $\N_i$ ($p_i \in
\mathscr{D}^p_i$), but instead that $p_q$ is held directly in $\N_i$: $p_q \in
\mathscr{P}_i$.}.  We have already reasoned that any cluster centroid node that
could possibly contain the closest cluster centroid to $p_q$ cannot have been
pruned; therefore, by the definition of pruning dual-tree traversal, we are
guaranteed that \texttt{BaseCase()} will be called with $p_q$ as the query point
and the closest cluster centroid as the reference point.  This will then cause
$\ub(p_q)$ to hold the distance to the closest cluster centroid---assuming
$\ub(p_q)$ is always valid, which it is even at the beginning of the traversal
because it is initialized to $\infty$---and $\closest(p_q)$ to hold the closest
cluster centroid.

Therefore, the first two conditions are proven.  The third and fourth
conditions, for the lower bounds, require a slightly different strategy.

There are two ways $\lb(\N_q)$ is modified: first, at line
14, when a node combination is pruned, and second, at
line 6 when the lower bound is taken from the parent.
Again, consider the set $R = \{ \N_{r0}, \N_{r1}, \ldots, \N_{rj} \}$ which is
the set of reference nodes visited during the traversal with $\N_q$ as a query
node.  Call the set of reference nodes that were pruned $R^p$.  At the end of
the traversal, then,

\vspace*{-1.3em}
\begin{eqnarray}
\lb(\N_q) &\le& \min_{\N_{ri} \in R^p} d_{\min}(\N_q, \N_{ri}) \\
  &\le& \min_{c_k \in C^p} d_{\min}(\N_q, c_k)
\end{eqnarray}
\vspace*{-1.3em}

\noindent where $C^p$ is the set of centroids that are descendants of nodes in
$R^p$.  Applying this reasoning recursively to the ancestors of $\N_q$ shows
that at the end of the dual-tree traversal, $\lb(\N_q)$ will contain a lower
bound on the distance between any descendant point of $\N_q$ and any pruned
centroid.  Thus, if $\pruned(\N_q) = k - 1$, then $\lb(\N_q)$ will contain a
lower bound on the distance between any descendant point in $\N_q$ and its
second closest centroid.  So if we consider some point $p_q$ which is a
descendant of $\N_q$ and $\N_q$ is pruned ($\pruned(\N_q) = k - 1$), then
$\lb(\N_q)$ is indeed a lower bound on the distance between $p_q$ and its second
closest centroid.

Now, consider the case where $p_q$ is not a descendant of any node that has been
pruned, and take $\N_q$ to be some node that owns $p_q$ (that is, $p_q \in
\mathscr{P}_q$).  In this case, \texttt{BaseCase()} will be called with every
centroid that has not been pruned.  So $\lb(\N_q)$ is a lower bound on the
distance between $p_q$ and every pruned centroid, and $\lb(p_q)$ will be a lower
bound on the distance between $p_q$ and the second-closest non-pruned centroid,
due to the structure of the \texttt{BaseCase()} function.  Therefore,
$\min(\lb(p_q), \lb(\N_q))$ must be a lower bound on the distance between $p_q$
and its second closest centroid.

Finally, we may conclude that each item in the theorem holds.
\end{proof}

Next, we must prove that \texttt{UpdateTree()} functions correctly.

\begin{lemma}
In the context of Algorithm 1 in the main paper, given a
tree $\mathscr{T}$ with all associated bounds $\ub(\cdot)$ and $\lb(\cdot)$ and
information $\pruned(\cdot)$, $\closest(\cdot)$, and $\canchange(\cdot)$, a run
of \texttt{UpdateTree()} as given in Algorithm \ref{alg:update_tree} will have
the following effects:

\begin{itemize}
  \item For every node $\N_i$, $\ub(\N_i)$ will be a valid upper bound on the
distance between any descendant point of $\N_i$ and its nearest centroid next
iteration.

  \item For every node $\N_i$, $\lb(\N_i)$ will be a valid lower bound on the
distance between any descendant point of $\N_i$ and any pruned centroid next
iteration.

  \item A node $\N_i$ will only have $\canchange(\N_i) = \mathtt{false}$ if the
owner of any descendant point of $\N_i$ cannot change next iteration.

  \item A point $p_i$ will only have $\canchange(p_i) = \mathtt{false}$ if the
owner of $p_i$ cannot change next iteration.

  \item Any point $p_i$ with $\canchange(p_i) = \mathtt{true}$ that does not
belong to any node $\N_i$ with $\canchange(\N_i) = \mathtt{false}$ will have
$\ub(p_i) = \lb(p_i) = \infty$, as required by the dual-tree traversal.

  \item Any node $\N_i$ with $\canchange(\N_i) = \mathtt{false}$ at the end of
\texttt{UpdateTree()} will have $\pruned(\N_i) = 0$.
\end{itemize}
\vspace*{-0.6em}
\label{lem:update_correct}
\end{lemma}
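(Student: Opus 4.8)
The plan is to prove the six claims by structural induction over $\mathscr{T}$, mirroring the recursive structure of \texttt{UpdateTree()}: process $\N_i$, recurse into each child, then handle the points held directly in $\N_i$. Throughout, I would treat the input as the state left by the dual-tree traversal together with \texttt{UpdateCentroids()}, so that Lemma \ref{lem:dt_correct} supplies the base facts: at entry, $\closest(\cdot)$ records the true owner, $\ub(\cdot)$ is a valid upper bound on the distance to the closest centroid \emph{this} iteration, and $\lb(\cdot)$ is a valid lower bound on the distance to the nearest pruned centroid this iteration. The goal is to upgrade each of these to a statement about \emph{next} iteration, and then show the $\canchange(\cdot)$ flags are set to $\mathtt{false}$ only when provably safe. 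I would establish the claims in dependency order --- bounds first (claims 1--2), then soundness of $\canchange(\cdot)$ (claims 3--4), then the two bookkeeping resets (claims 5--6) --- since the latter rest on the former.

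For the bound claims, I would do a case analysis over the three branches that set $\ub(\N_i)$ and $\lb(\N_i)$. The arithmetic is governed by the observation underlying Strategy 4: after recomputation a centroid $c_i$ moves by exactly $m_i$, so the distance from a fixed point to its owner $c_j$ grows by at most $m_j$, while the distance to the nearest pruned centroid shrinks by at most $\max_i m_i$. Hence, in the single-owner and multi-owner branches, adding $m_j$ to $\ub$ and subtracting $\max_i m_i$ from $\lb$ preserves validity for next iteration. In the single-owner branch I would additionally justify that clamping $\lb(\N_i)$ from below by $\min_{k \ne j} d(c_k, c_j)/2$ is valid via the inter-centroid strengthening of Equation \ref{eqn:static-2}, and that tightening $\ub(\N_i)$ to $d_{\max}(\N_i, c_j)$ under a $\min$ cannot invalidate the upper bound. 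The parent-inheritance branch is handled by the remark at the end of Section \ref{sec:strategies}: a parent's $\ub$ and $\lb$ are valid bounds for any child, so inheriting them is sound, and I would check separately that the movement corrections applied there leave the inherited bounds valid for the (now frozen) subtree.

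For the $\canchange(\cdot)$ soundness claims, I would argue that every place the algorithm sets a flag to $\mathtt{false}$ is certified by a valid instance of Strategy 3. Since claims 1--2 already give valid next-iteration bounds, the node-level test $\ub(\N_i) < \lb(\N_i)$ is exactly Equation \ref{eqn:static-1} (with the inter-centroid rule of Equation \ref{eqn:static-2} folded into $\lb$), so it correctly certifies that the single owner of $\N_i$ cannot change; the analogous point-level test certifies the same for an individual $p_i$. I would then justify the aggregation step: if $\canchange(\cdot) = \mathtt{false}$ for every child of $\N_i$ and every point held in $\N_i$, then the set of owners of $\mathscr{D}^p_i$ is fixed, which is precisely what $\canchange(\N_i) = \mathtt{false}$ must mean when $\N_i$ need not have a single owner. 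Here the induction hypothesis is essential, since the children's flags are finalized by the recursive calls before $\N_i$ aggregates them.

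Finally, claim 5 follows by inspection of the lines where a surviving point with $\canchange(p_i) = \mathtt{true}$ has its bounds reset to $\infty$ --- the state the next traversal requires, since \texttt{BaseCase()} carries no information across iterations --- and the bookkeeping reset of claim 6 follows by direct inspection of the final conditional that reinitializes $\pruned(\N_i)$ so the next traversal's counting starts cleanly. I expect the main obstacle to be the parent-inheritance branch and its interaction with coalescing: one must argue that bounds handed to a subtree that \texttt{CoalesceTree()} will \emph{remove} are nonetheless left in a state that remains valid once the subtree is restored by \texttt{DecoalesceTree()} in a later iteration, rather than merely valid for a traversal that never visits it. Keeping straight which bounds must be tight (for live nodes that drive pruning) versus merely valid (for frozen subtrees) is the delicate part of the argument.
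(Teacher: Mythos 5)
Your proposal is correct and follows essentially the same route as the paper's proof: a branch-by-branch case analysis of the bound updates using the triangle inequality with the centroid movements $m_j$ and $\max_i m_i$, a recursive (root-down) argument to validate the parent-inheritance branch, soundness of $\canchange(\cdot)$ reduced to Equations \ref{eqn:static-1} and \ref{eqn:static-2} applied to the already-validated next-iteration bounds plus the aggregation rule over children and held points, and direct inspection for the $\infty$-reset and $\pruned(\cdot)=0$ bookkeeping items. The paper handles your worry about frozen subtrees the same way you suggest --- by noting that bounds of points and nodes with $\canchange(\cdot)=\mathtt{false}$ are untouched by the traversal and are kept valid across iterations by the maintenance updates, so no separate coalescing argument is needed.
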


\begin{proof}
Each point is best considered individually.  It is important to remember during
this proof that the centroids have been updated, but the bounds have not.  So
any cluster centroid $c_i$ is already set for next iteration.  Take $c^l_i$ to
mean the cluster centroid $c_i$ {\it before} adjustment (that is, the old
centroid).  Also take $\ub^l(\cdot)$, $\lb^l(\cdot)$, $\pruned^l(\cdot)$, and
$\canchange^l(\cdot)$ to be the values at the time \texttt{UpdateTree()} is
called, before any of those values are changed.  Due to the assumptions in the
statement of the lemma, each of these quantities is valid.

Suppose that for some node $\N_i$, $\closest(\N_i)$ is some cluster $c_j$.  For
$\ub(\N_i)$ to be valid for next iteration, we must guarantee that $\ub(\N_i)
\ge \max_{p_q \in \mathscr{D}^p_q} d(p_q, c_j)$ at the end of
\texttt{UpdateTree()}.  There are four ways $\ub(\N_i)$ is updated: it may be
taken from the parent and adjusted (line 8), it may be adjusted before a prune
attempt (line 14), it may be tightened after a failed prune attempt (line 21),
or it may be adjusted without a prune attempt (line 25).  If we can show that
each of these four ways always results in $\ub(\N_i)$ being valid, then the
first condition of the theorem holds.

If $\ub(\N_i)$ is adjusted in line 14 or 25, the resulting value of $\ub(\N_i)$,
assuming $\closest(\N_i) = c_j$, is

\vspace*{-1em}
\begin{eqnarray}
\ub(\N_i) &=& \ub^l(\N_i) + m_j \\
  &\ge& \max_{p_q \in \mathscr{D}^p_q} d(p_q, c^l_j) + m_j \\
  &\ge& \max_{p_q \in \mathscr{D}^p_q} d(p_q, c_j)
\end{eqnarray}
\vspace*{-1em}

\noindent where the last step follows by the triangle inequality: $d(c_j, c^l_j)
= m_j$.  Therefore those two updates to $\ub(\N_i)$ result in valid upper bounds
for next iteration.  If $\ub(\N_i)$ is recalculated, in line
21, then we are guaranteed that $\ub(\N_i)$ is valid because

\vspace*{-1em}
\begin{equation}
d_{\max}(\N_i, c_j) \ge \max_{p_q \in \mathscr{D}^p_q} d(p_q, c_j).
\end{equation}
\vspace*{-1em}

We may therefore conclude that $\ub(\N_i)$ is correct for the root of the tree,
because line 8 can never be reached.  Reasoning
recursively, we can see that any upper bound passed from the parent must be
valid.  Therefore, the first item of the lemma holds.

Next, we will consider the lower bound, using a similar strategy.  We must show
that

\vspace*{-0.9em}
\begin{equation}
\lb(\N_i) \le \min_{p_q \in \mathscr{D}^p_q} \min_{c_p \in C_p} d(p_q, c_p)
\end{equation}
\vspace*{-0.9em}

\noindent where $C_p$ is the set of centroids pruned by $\N_i$ and ancestors
during the last dual-tree traversal.  The lower bound can be taken from the
parent in line 9 and adjusted, it can be adjusted before a prune attempt in line
15 or in a similar way without a prune attempt in line 26.  The last adjustment
can easily be shown to be valid:

\begin{eqnarray}
\lb(\N_i) &=& \lb^l(\N_i) - \max_k m_k \\
  &\le& \left( \min_{p_q \in \mathscr{D}^p_q} \min_{c_p \in C_p} d(p_q, c^l_p)
\right) - \max_k m_k \\
  &\le& \min_{p_q \in \mathscr{D}^p_q} \min_{c_p \in C_p} d(p_q, c_p)
\end{eqnarray}
\vspace*{-0.8em}

\noindent which follows by the triangle inequality: $d(c^l_p, c_p) \le \max_k
m_k$.  Line 15 is slightly more complex; we must also consider the term $\min_{k
\ne j} d(c_k, c_j) / 2$.  Suppose that

\vspace*{-0.8em}
\begin{equation}
\min_{k \ne j} d(c_k, c_j) / 2 > \lb^l(\N_i) + \max_k m_k.
\end{equation}
\vspace*{-0.8em}

We may use the triangle inequality ($d(p_q, c_k) \le d(c_j, c_k) + d(p_q, c_j)$)
to show that if this is true, the second closest centroid $c_k$ is such that
$d(p_q, c_k) > 2 d(c_k, c_j)$ and therefore $\min_{k \ne j} d(c_k, c_j) / 2$ is
also a valid lower bound.  We can lastly use the same recursive argument from
the upper bound case to show that the second item of the lemma holds.

Showing the correctness of $\canchange(\N_i)$ is straightforward: we know that
$\ub(\N_i)$ and $\lb(\N_i)$ are valid for next iteration by the time
any checks to set $\canchange(\N_i)$ to \texttt{false} happens, due to the
discussion above.  The situations where $\canchange(\N_i)$ is set to
\texttt{false}, in line 18 and 22, are simply applications of Equations
4 and 6 in the main paper, and are therefore valid.  There are
two other ways $\canchange(\N_i)$ can be set to \texttt{false}.  The first is on
line 10, and this is easily shown to be valid: if
a parent's owner cannot change, then a child's owner cannot change either.  The
other way to set $\canchange(\N_i)$ to \texttt{false} is in line 53.  This is only possible if all
points in $\mathscr{P}_i$ and all children of $\N_i$ have $\canchange(\cdot)$
set to \texttt{false}; thus, no descendant point of $\N_i$ can change owner next
iteration, and we may set $\canchange(\N_i)$ to \texttt{false}.

Next, we must show that $\canchange(p_i) = \mathtt{false}$ only if the owner of
$p_i$ cannot change next iteration.  If $\canchange^l(p_i) = \mathtt{true}$,
then due to Lemma \ref{lem:dt_correct}, $\ub^l(p_i)$ and $\lb^l(p_i)$ will be valid
bounds.  In this case, we may use similar reasoning to show that $\ub(p_i)$ and
$\lb(p_i)$ are valid, and then we may see that the pruning attempts at line
35 and 40 are valid.  Now, consider the other
case, where $\canchange^l(p_i) = \mathtt{false}$.  Then, $\ub^l(p_i)$ and
$\lb^l(p_i)$ will not have been modified by the dual-tree traversal, and will
hold the values set in the previous run of \texttt{UpdateTree()}.  As long as
those values are valid, then the fourth item holds.

The checks to see if $\canchange(p_i)$ can be set to \texttt{false} (from lines
31 to 45) are only reached if $\canchange(\N_i)$ is
\texttt{true}.  We already have shown that $\ub(p_i)$ and $\lb(p_i)$ are set
correctly in that stanza.  The other case is if $\canchange(\N_i)$ is
\texttt{false}.  In this case, lines 47 to 51 are reached.  It is easy to see
using similar reasoning to all previous cases that these lines result in valid
$\ub(p_i)$ and $\lb(p_i)$.  Therefore, the fourth item does hold.

The fifth item is taken care of in line 44 and 45.  Given some point $p_i$ with
$\canchange(p_i) = \mathtt{true}$, and where $p_i$ does not belong to any node
$\N_i$ where $\canchange(\N_i) = \mathtt{false}$, these two lines must be
reached, and therefore the fifth item holds.

The last item holds trivially---any node $\N_i$ where $\canchange(\N_i) =
\mathtt{true}$ has $\pruned(\N_i)$ set to $0$ on line 55.
\end{proof}

Showing that \texttt{CoalesceTree()}, \texttt{DecoalesceTree()}, and
\texttt{UpdateCentroids()} function correctly follows directly from the
algorithm descriptions.  Therefore, we are ready to show the main correctness
result.

\begin{thm}
A single iteration of dual-tree $k$-means as given in Algorithm
1 in the main paper will produce exactly the same results as the standard
brute-force $O(kN)$ implementation.
\end{thm}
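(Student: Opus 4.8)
The plan is to prove the claim by induction on the iteration index, chaining together Lemmas~\ref{lem:dt_correct} and~\ref{lem:update_correct} so that the postconditions established at the end of one iteration furnish exactly the preconditions required at the start of the next. The inductive hypothesis, stated for the tree $\mathscr{T}$ as it enters the coalescing step (line 6) of iteration $t$, bundles the guarantees of Lemma~\ref{lem:update_correct}: every node $\N_i$ carries a valid $\ub(\N_i)$ and $\lb(\N_i)$ with respect to the centroids used this iteration; $\canchange(\cdot)$ is \texttt{false} only for nodes and points whose owner provably cannot change this iteration, with $\closest(\cdot)$ recording the correct (unchanging) owner; every point that is still ``live'' (i.e.\ $\canchange$ is \texttt{true} and it belongs to no $\canchange$-\texttt{false} node) satisfies $\ub(p_q) = \lb(p_q) = \infty$; and $\pruned(\cdot)$ has been reset as required by Lemma~\ref{lem:update_correct}. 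The base case is immediate: on the first iteration all flags are \texttt{true} and all point bounds are $\infty$, so \texttt{CoalesceTree()} and \texttt{DecoalesceTree()} act as the identity and Lemma~\ref{lem:dt_correct} applies directly with the stated initial conditions.

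For the inductive step I would first argue that \texttt{CoalesceTree()} yields a valid space tree and removes only nodes (and the points they hold) whose owners are fixed for this iteration; this follows from the inductive hypothesis together with the observation (Strategy four) that a parent's bounds remain valid for a child, which is what legitimizes collapsing a node onto its grandparent. On the coalesced tree the preconditions of Lemma~\ref{lem:dt_correct} are met exactly---the surviving nodes carry valid bounds and every surviving live point has $\ub = \lb = \infty$---so the dual-tree recursion on line 9 terminates with correct $\ub(\cdot)$, $\lb(\cdot)$, $\pruned(\cdot)$, and $\closest(\cdot)$ for every live point and node. \texttt{DecoalesceTree()} then restores the full tree, reattaching the previously coalesced subtrees; because their owners are fixed they still carry correct $\closest(\cdot)$ and $\pruned(\cdot)$ information, so their points can be accounted for either in one step at a single-owner node or by recursing to owner-fixed leaves.

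The crux is then \texttt{UpdateCentroids()}. I would show that its depth-first sweep partitions the $N$ points into groups each with a single correct owner: on reaching a node with $\pruned(\cdot) = k - 1$ it adds $|\mathscr{D}^p_i|\,\centroid(\N_i)$ to the accumulator of $\closest(\N_i)$, which equals the exact sum of that node's descendant points, and then terminates the branch; otherwise it recurses and, at leaves, adds each not-yet-considered point to its own $\closest(p_i)$. Combining the owners computed this iteration for live nodes and points with the owners carried over on the coalesced (owner-fixed) regions, every one of the $N$ points is counted once and into its true nearest centroid, so the per-cluster sums and counts---and hence the means returned after the final division---coincide with those of the brute-force assignment. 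A final invocation of Lemma~\ref{lem:update_correct} re-establishes the inductive hypothesis for iteration $t + 1$ using the freshly computed centroids, closing the induction.

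The step I expect to be the main obstacle is the bookkeeping in \texttt{UpdateCentroids()}: one must verify that the union of the freshly determined live owners and the owners carried silently on coalesced subtrees forms an \emph{exact} partition of $S$, with no point double-counted---a genuine danger under tree types that hold a point in more than one node, since two owner-fixed subtrees could share a descendant with the same owner---and none omitted. Establishing this disjointness precisely, and confirming that $|\mathscr{D}^p_i|\,\centroid(\N_i)$ is genuinely the sum of descendant points so that collapsing a whole subtree into one owner is exact, is where the care is needed; the remaining pieces are direct applications of the two lemmas.
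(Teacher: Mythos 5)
Your proposal is correct and takes essentially the same route as the paper: it assembles Lemma~\ref{lem:dt_correct} (correctness of the traversal's bounds, $\pruned(\cdot)$, and $\closest(\cdot)$), Lemma~\ref{lem:update_correct} (correctness of \texttt{UpdateTree()}), and the directly verified behavior of \texttt{CoalesceTree()}, \texttt{DecoalesceTree()}, and \texttt{UpdateCentroids()}. The only differences are presentational --- you make explicit the cross-iteration induction that supplies each lemma's preconditions, which the paper leaves implicit, and you flag the \texttt{UpdateCentroids()} no-double-counting bookkeeping as the crux, where the paper simply asserts it follows ``directly from the algorithm descriptions.''
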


\begin{proof}
We may use the previous lemmas to flesh out our earlier proof sketch.

First, we know that the dual-tree algorithm (line
9) produces correct results for $\ub(\cdot)$,
$\lb(\cdot)$, $\pruned(\cdot)$, and $\closest(\cdot)$ for every point and node,
due to Lemma \ref{lem:dt_correct}.
Next, we know that \texttt{UpdateTree()} maintains the correctness of those four
quantities and only marks $\canchange(\cdot)$ to \texttt{false} when the node or
point truly cannot change owner, due to Lemma \ref{lem:update_correct}.  Next,
we know from earlier discussion that \texttt{CoalesceTree()} and
\texttt{DecoalesceTree()} do not affect the results of the dual-tree algorithm
because the only nodes and points removed are those where $\canchange(\cdot) =
\mathtt{false}$.  We also know that \texttt{UpdateCentroids()} produces
centroids correctly.  Therefore, the results from Algorithm 1 in the main paper
are identical to those of a brute-force $O(kN)$ $k$-means implementation.
\end{proof}

\subsection{Runtime bound proof}

We can use adaptive algorithm analysis techniques in order to bound the running
time of Algorithm 1 in the main paper, based on \cite{curtin2015plug} and
\cite{langford2006}.  This analysis depends on the {\it expansion constant},
which is a measure of intrinsic dimension defined below, originally from
\cite{karger2002finding}.

\begin{defn}
\label{def:int_dim}
Let $B_S(p, \Delta)$ be the set of points in $S$ within a closed ball of radius
$\Delta$ around some $p \in S$ with respect to a metric $d$:

\vspace*{-1.2em}
\begin{equation}
B_S(p, \Delta) = \{ r \in S \colon d(p, r) \leq \Delta \}.
\end{equation}
\vspace*{-1.2em}

Then, the {\bf expansion constant} of $S$ with respect to the metric $d$ is the
smallest $c \ge 2$ such that

\vspace*{-1.2em}
\begin{equation}| B_S(p, 2 \Delta) | \le c | B_S(p, \Delta) |\ \forall\ p \in S,\
\forall\ \Delta > 0.
\end{equation}
\vspace*{-1.2em}
\end{defn}

The expansion constant is a bound on the number of points which fall into balls
of increasing sizes.  A low expansion constant generally means that search tasks
like nearest neighbor search can be performed quickly with trees, whereas a high
expansion constant implies a difficult dataset.  Thus, if we assume a bounded
expansion constant like in previous theoretical works \cite{langford2006,
ram2009, karger2002finding, curtin2014dual, curtin2015plug}, we may assemble a
runtime bound that reflects the difficulty of the dataset.

Our theoretical analysis will concern the cover tree in particular.  The cover
tree is a complex data structure with
appealing theoretical properties.  We will only summarize the relevant
properties here.  Interested readers should consult the original cover tree
paper \cite{langford2006} and later analyses \cite{ram2009, curtin2015plug} for
a complete understanding.

A cover tree is a leveled tree; that is, each cover tree node $\mathscr{N}_i$ is
associated with an integer scale $s_i$.  The node with largest scale is the root
of the tree; each node's scale is greater than its children's.  Each node
$\mathscr{N}_i$ holds one point $p_i$, and every descendant point of
$\mathscr{N}_i$ is contained in the ball centered at $p_i$ with radius $2^{s_r +
1}$.  Further, every cover tree satisfies the following three invariants
\cite{langford2006}:

\begin{itemize}
\item {\it (Nesting.)}  When a point $p_i$ is held in a node at some scale
$s_i$, then each smaller scale will also have a node containing $p_i$.

\item {\it (Covering tree.)}  For every point $p_i$ held in a node
$\mathscr{N}_i$ at scale $s_i$, there exists a node with point $p_j$ and scale
$s_i + 1$ which is the parent of $\mathscr{N}_i$, and $d(p_i, p_j) < 2^{s_i +
1}$.

\item {\it (Separation.)}  Given distinct nodes $\mathscr{N}_i$ holding $p_i$
and $\mathscr{N}_j$ holding $p_j$ both at scale $s_i$, $d(p_i, p_j) > 2^{s_i}$.
\end{itemize}

A useful result shows there are $O(N)$ points in a cover tree (Theorem 1,
\cite{langford2006}).  Another measure of importance of a cover tree is the {\it
cover tree imbalance}, which aims to capture how well the data is distributed
throughout the cover tree.  For instance, consider a tree where the root, with
scale $s_r$, has two
nodes; one node corresponds to a single point and has scale $-\infty$, and the
other node has scale $s_r - 1$ and contains every other point in the dataset as
a descendant.  This is very imbalanced, and a tree with many situations like
this will not perform well for search tasks.  Below, we reiterate the definition
of cover tree imbalance from \cite{curtin2015plug}.

\begin{defn}
The {\it cover node imbalance} $i_n(\mathscr{N}_i)$ for a cover tree node
$\mathscr{N}_i$ with scale $s_i$ in the cover tree $\mathscr{T}$ is defined as
the cumulative number of missing levels between the node and its parent
$\mathscr{N}_p$ (which has scale $s_p$).  If
the node is a leaf child (that is, $s_i = -\infty$), then number of missing
levels is defined as the difference between $s_p$ and $s_{\min} - 1$ where
$s_{\min}$ is the smallest scale of a non-leaf node in $\mathscr{T}$.  If
$\mathscr{N}_i$ is the root of the tree, then the cover node imbalance is 0.
Explicitly written, this calculation is

\begin{equation}
i_n(\mathscr{N}_i) = \begin{dcases*}
  s_p - s_i - 1 & if $\mathscr{N}_i$ is not a \\
  & leaf and not \\
  & the root node \\
  \max(s_p - s_{\min} - 1, \; 0) & if $\mathscr{N}_i$ is a leaf \\
  0 & if $\mathscr{N}_i$ is the root. \\
  \end{dcases*}
  \label{eqn_node_imbalance}
\end{equation}
\end{defn}

This simple definition of cover node imbalance is easy to calculate, and using
it, we can generalize to a measure of imbalance for the full tree.

\begin{defn}
\label{def:imbalance}
The {\it cover tree imbalance} $i_t(\mathscr{T})$ for a cover tree $\mathscr{T}$
is defined as the cumulative number of missing levels in the tree.  This can be
expressed as a function of cover node imbalances easily:

\begin{equation}
i_t(\mathscr{T}) = \sum_{\mathscr{N}_i \in \mathscr{T}} i_n(\mathscr{N}_i).
\end{equation}
\end{defn}

Bounding $i_t(\mathscr{T})$ is non-trivial, but empirical results suggest that
imbalance scales linearly with the size of the dataset, when the expansion
constant is well-behaved.  A bound on $i_t(\mathscr{T})$ is still an open
problem at the time of this writing.

With these terms introduced, we may introduce a slightly adapted result from
\cite{curtin2015plug}, which bounds the running time of nearest neighbor search.

\begin{thm}
(Theorem 2, \cite{curtin2015plug}.)  Using cover trees, the standard cover tree
pruning dual-tree traversal, and the
nearest neighbor search \texttt{BaseCase()} and \texttt{Score()} as given in
Algorithms 2 and 3 of \cite{curtin2015plug}, respectively, and also
given a reference set $S_r$ with expansion constant $c_r$, and a query set
$S_q$, where the range of pairwise distances in $S_r$ is completely contained in
the range of pairwise distances in $S_q$, the running time of nearest neighbor
search is bounded by $O(c_r^4 c_{qr}^5 (N + i_t(\mathscr{T}_q)))$, where
$c_{qr} = \max((\max_{p_q \in S_q} c_r'), c_r)$, where $c_r'$ is the expansion
constant of the set $S_r \cup \{ p_q \}$.
\label{thm:nns}
\end{thm}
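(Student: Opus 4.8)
The plan is to reduce the running time to a count of the node combinations $(\N_q, \N_r)$ that the standard cover tree dual-tree traversal actually scores, and then bound that count. This reduction is justified because the nearest neighbor \texttt{BaseCase()} and \texttt{Score()} of \cite{curtin2015plug} each perform only a constant number of distance evaluations and comparisons, so every scored combination and every base case costs $O(1)$. It therefore suffices to show that the total number of combinations examined is $O(c_r^4 c_{qr}^5 (N + i_t(\mathscr{T}_q)))$.

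First I would recall the mechanics of the level-synchronized cover tree traversal: it proceeds in decreasing scale, at each stage holding a query node $\N_q$ together with a candidate set $R$ of surviving reference nodes, and descending whichever of the two trees currently has the larger scale. The total work is then $\sum_{\N_q} |R(\N_q)|$ over all query nodes visited, so the central quantity to control is $|R(\N_q)|$, the number of reference nodes that survive pruning at a given query node. The key step is a packing argument that bounds this by a constant depending only on the expansion constants. Here I would combine the separation and covering invariants of the cover tree with Definition \ref{def:int_dim}: reference nodes sharing a common scale $s$ are mutually $2^s$-separated, while all surviving nodes lie within a ball whose radius is controlled by the current upper bound on the nearest neighbor distance (as enforced by the NN \texttt{Score()} rule). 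A doubling estimate then bounds the number of $2^s$-separated points inside such a ball by a fixed power of the expansion constant, yielding the $c_r^4$ factor; folding each query point $p_q$ into the reference set, so that distances from $p_q$ obey packing bounds on $S_r \cup \{p_q\}$, contributes the additional $c_{qr}^5$ factor with $c_{qr}$ as defined in the statement. This is precisely where the hypothesis that the range of pairwise distances in $S_r$ is contained in that of $S_q$ is needed: it guarantees that the scales of $\mathscr{T}_r$ align with a contiguous band of scales of $\mathscr{T}_q$, so the synchronized traversal never confronts reference nodes spread across an unbounded number of scales.

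The remaining step is to count the query nodes actually visited, which I expect to be the more delicate bookkeeping. For a perfectly balanced cover tree this count is $O(N)$, but genuine trees contain degenerate (missing) levels through which the traversal must still descend. I would charge this extra traversal to the cover tree imbalance $i_t(\mathscr{T}_q)$ of Definition \ref{def:imbalance}, obtaining $O(N + i_t(\mathscr{T}_q))$ query-side work; multiplying by the per-query bound $O(c_r^4 c_{qr}^5)$ on $|R(\N_q)|$ gives the claimed total.

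The main obstacle is making the packing bound on $|R(\N_q)|$ fully rigorous while correctly propagating the combined expansion constant $c_{qr}$ through the recursion — in particular, verifying at every scale that the ball radii used in the packing estimate are legitimately bounded by the current nearest neighbor upper bound, which is the content imported from the \texttt{Score()} pruning rule, and confirming that the imbalance truly accounts for all passes through missing levels rather than merely the leaf-level degeneracies. Since the statement is quoted verbatim as Theorem 2 of \cite{curtin2015plug}, I would defer the detailed packing computation and the exact exponents to that reference, and use the result here only as a black box to bound the dual-tree recursion of Algorithm \ref{alg:high_level}.
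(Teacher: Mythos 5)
The paper itself gives no proof of this statement: it is imported verbatim as Theorem~2 of \cite{curtin2015plug} and used purely as a black box in the subsequent runtime theorems, which is exactly the stance your final paragraph adopts. Your reconstruction of the underlying argument (constant-cost \texttt{BaseCase()}/\texttt{Score()}, a packing bound via the separation and covering invariants and the expansion constants on the surviving reference set at each query node, multiplied by an $O(N + i_t(\mathscr{T}_q))$ count of query-node visits with the imbalance charging the missing levels) is consistent with how the cited paper actually proves it, so there is no gap; deferring the exact exponents to the reference is the correct and intended usage here.
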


Now, we may adapt this result slightly.

\begin{thm}
The dual-tree $k$-means algorithm with \texttt{BaseCase()} as in Algorithm
2 in the main paper and \texttt{Score()} as in Algorithm 3 in the main paper, with a
point set $S_q$ that has expansion constant $c_q$ and size $N$, and $k$ centroids
$C$ with expansion constant $c_k$, takes no more than $O(c_k^4 c_{qk}^5 (N +
i_t(\mathscr{T}_q)))$ time.
\label{thm:dtkm}
\end{thm}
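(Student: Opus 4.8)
The plan is to reduce the runtime analysis to the nearest-neighbor-search bound of Theorem \ref{thm:nns}, since the task solved by the traversal through \texttt{BaseCase()} (Algorithm 2) and \texttt{Score()} (Algorithm 3) is, at its core, exactly nearest neighbor search: assigning each point $p_q \in S_q$ its nearest centroid is nearest neighbor search with the centroids $C$ as the reference set and $S_q$ as the query set. I would therefore instantiate Theorem \ref{thm:nns} with reference set $C$ (so that $c_r = c_k$ and $c_{qr} = c_{qk}$) and query set $S_q$ (with query tree $\mathscr{T}_q$ and $N = |S_q|$). If the number of node combinations $(\N_q, \N_r)$ scored and the number of \texttt{BaseCase()} calls made by the $k$-means traversal are each no larger than those made by a nearest-neighbor traversal on the same two trees, the bound $O(c_k^4 c_{qk}^5 (N + i_t(\mathscr{T}_q)))$ follows immediately.

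First I would argue that each \texttt{BaseCase()} and each \texttt{Score()} call costs $O(1)$ time (a constant number of distance evaluations and comparisons), so that the total running time is proportional to the number of node combinations visited during the traversal. The one bookkeeping step that is not obviously constant, the update of $\pruned(\N_q)$ on line 15 of \texttt{Score()}, can be computed from precomputed descendant counts and charged against the reference descendants it counts, and so amortized into the same total without changing the asymptotics.

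The crux is then to show that the $k$-means pruning is at least as tight as nearest-neighbor pruning, so that the $k$-means traversal explores only a subset of the node combinations a nearest-neighbor traversal would explore. The key pruning test, line 10 of \texttt{Score()} ($d_{\min}(\N_q, \N_r) > \ub(\N_q)$), is identical in form to the nearest-neighbor rule, and by Lemma \ref{lem:dt_correct} the quantity $\ub(\N_q)$ is a valid upper bound on the closest-centroid distance of every descendant of $\N_q$; in the worst case of fresh bounds (as assumed for the correctness argument) it coincides with the running nearest-neighbor upper bound. The additional early-termination rule of Strategy two (prune once $\pruned(\N_q) = k - 1$) only removes further combinations and can never cause the traversal to visit a combination that nearest neighbor search would skip. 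Hence no combination scored by the $k$-means traversal is one that nearest neighbor search would prune, and the combination and \texttt{BaseCase()} counts are dominated by their nearest-neighbor counterparts.

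I expect the main obstacle to be discharging the hypotheses of Theorem \ref{thm:nns} for this particular reference/query pairing rather than the pruning-dominance argument itself. In particular, that theorem requires the range of pairwise distances in the reference set (here the centroids $C$) to be contained in the range of pairwise distances in the query set $S_q$. The diameter side of this containment holds because each centroid lies in the convex hull of the points, but the minimum-pairwise-distance side is not automatic and may require either a mild assumption on the configuration of the centroids or a short separate argument. Once the hypotheses are verified, substituting $c_r = c_k$ and $c_{qr} = c_{qk}$ into Theorem \ref{thm:nns} yields the stated bound.
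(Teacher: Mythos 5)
Your proposal follows essentially the same route as the paper's own proof: reduce to the nearest-neighbor bound of Theorem \ref{thm:nns} by arguing that each \texttt{BaseCase()} and \texttt{Score()} call costs $O(1)$, that the $k$-means pruning rule (line 10 of \texttt{Score()}) is at least as tight as the nearest-neighbor pruning rule so the traversal visits no more node combinations, and then instantiating the theorem with reference set $C$ and query set $S_q$. Your caution about the distance-range hypothesis is well placed---the paper simply asserts that the pairwise distances of $C$ are contained in those of $S_q$ without addressing the minimum-pairwise-distance side (two centroids can indeed be closer together than any two data points)---so your version is, if anything, more careful than the published proof on exactly the point where it is weakest.
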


\begin{proof}
Both \texttt{Score()} and \texttt{BaseCase()} for dual-tree $k$-means can be
performed in $O(1)$ time.  In addition, the pruning of \texttt{Score()} for
dual-tree $k$-means is at least as tight as \texttt{Score()} for nearest
neighbor search: the pruning rule in Equation 2 in the main paper is equivalent to
the pruning rule for nearest neighbor search.  Therefore, dual-tree $k$-means
can visit no more nodes than nearest neighbor search would with query set $S_q$
and reference set $C$.  Lastly, note that the range of pairwise distances of $C$
will be entirely contained in the range of pairwise distances in $S_q$, to see
that we can use the result of Theorem \ref{thm:nns}.  Adapting that result,
then, yields the statement of the algorithm.
\end{proof}

The expansion constant of the centroids, $c_k$, may be understood as the
intrinsic dimensionality of the centroids $C$.  During each iteration, the
centroids change, so those iterations that have centroids with high intrinsic
dimensionality cannot be bounded as tightly.  More general measures of intrinsic
dimensionality, such as those recently proposed by Houle
\cite{houle2013dimensionality}, may make the connection between $c_q$ and $c_k$
clear.

Next, we turn to bounding the entire algorithm.

\begin{thm}
A single iteration of the dual-tree $k$-means algorithm on a dataset $S_q$ using
the cover tree $\mathscr{T}$, the standard cover tree pruning dual-tree
traversal, \texttt{BaseCase()} as given in Algorithm 2 in the main paper,
\texttt{Score()} as given in Algorithm 3 in the main paper, will take no more than

\vspace*{-0.3em}
\begin{equation}
O(c_k^4 c_{qk}^5 (N + i_t(\mathscr{T})) + c_k^9 k \log k)
\end{equation}

\noindent time, where $c_k$ is the expansion constant of the centroids, $c_{qk}$
is defined as in Theorem \ref{thm:dtkm}, and $i_t(\mathscr{T})$ is the imbalance of
the tree as defined in Definition \ref{def:imbalance}.
\end{thm}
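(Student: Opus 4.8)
The plan is to treat one pass through the loop body of Algorithm \ref{alg:high_level} as a sum of independent costs, bound each piece with the cover-tree results already in hand, and then collapse the sum into the stated form. First I would enumerate the operations performed in a single iteration: the coalescing step (line 6), construction of the centroid tree $\mathscr{T}_c$ (line 7), the dual-tree recursion (line 9), decoalescing (line 11), \texttt{UpdateCentroids()} (line 13), and \texttt{UpdateTree()} (line 14). Because a cover tree on $N$ points has only $O(N)$ nodes \cite{langford2006}, each of the coalescing, decoalescing, and centroid-update routines is a single depth-first pass over the tree and therefore runs in $O(N)$ time; these contributions will be absorbed at the end.

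Next I would attach the nontrivial bounds. Building the cover tree on the $k$ centroids costs $O(c_k^6 k \log k)$ by the standard cover-tree construction bound \cite{langford2006}. The dual-tree recursion is handled directly by Theorem \ref{thm:dtkm}, giving $O(c_k^4 c_{qk}^5 (N + i_t(\mathscr{T})))$; since the recursion is actually run on the coalesced tree, which contains no more nodes than $\mathscr{T}$, the full-tree bound is a valid over-estimate. The one remaining cost hides inside \texttt{UpdateTree()} (Algorithm \ref{alg:update_tree}): the Elkan-style rule used there needs, for each centroid $c_j$, the quantity $\min_{k \ne j} d(c_k, c_j)$, which is an all-nearest-neighbors computation on $C$. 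I would bound this by invoking Theorem \ref{thm:nns} with query and reference sets both equal to $C$, so that $c_{qr} = c_r = c_k$ and the cost becomes $O(c_k^4 c_k^5 (k + i_t(\mathscr{T}_c))) = O(c_k^9 (k + i_t(\mathscr{T}_c)))$.

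Summing the pieces gives
\[
O\!\left( c_k^4 c_{qk}^5 (N + i_t(\mathscr{T})) + c_k^6 k \log k + c_k^9 (k + i_t(\mathscr{T}_c)) + N \right),
\]
and the final step is simplification. The lone $O(N)$ folds into the first term, and because the expansion constant satisfies $c_k \ge 2$ we have $c_k^6 \le c_k^9$ and $k \le k \log k$, so both $c_k^6 k \log k$ and $c_k^9 k$ are dominated by $c_k^9 k \log k$. What remains is to absorb $c_k^9 i_t(\mathscr{T}_c)$ into the same term, which requires $i_t(\mathscr{T}_c) = O(k \log k)$.

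I expect this last absorption to be the \emph{main obstacle}, since bounding the cover-tree imbalance in general is flagged as open earlier in the paper. I would handle it by appealing to the structure of a cover tree on $k$ points: each node contributes at most the number of explicit scale levels to the imbalance and there are only $O(k)$ nodes, so it suffices to control the tree depth; alternatively, and more in keeping with the rest of the paper, I would invoke the empirical regularity that $i_t$ scales linearly in the set size, giving $i_t(\mathscr{T}_c) = O(k)$, which is itself $O(k \log k)$. Either route lets $c_k^9 i_t(\mathscr{T}_c)$ be subsumed into $c_k^9 k \log k$, yielding the claimed bound $O(c_k^4 c_{qk}^5 (N + i_t(\mathscr{T})) + c_k^9 k \log k)$.
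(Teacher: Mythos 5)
Your decomposition of the iteration is the same as the paper's --- coalescing, decoalescing, and \texttt{UpdateCentroids()} as $O(N)$ single passes over an $O(N)$-node cover tree, centroid tree construction at $O(c_k^6 k \log k)$, and the dual-tree recursion bounded by Theorem \ref{thm:dtkm} --- and all of those pieces are handled correctly. The divergence, and the problem, is in how you bound the centroid-to-centroid nearest neighbor computation inside \texttt{UpdateTree()}. The paper invokes the runtime bound for \emph{monochromatic} nearest neighbor search with cover trees from \cite{ram2009}, which costs $O(c_k^9 k)$ with no imbalance term at all; the remaining work in \texttt{UpdateTree()} is $O(k)$ for centroid movements plus $O(N)$ for the tree pass, so every $k$-dependent term is dominated by $c_k^9 k \log k$ and the simplification goes through with no obstacle. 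You instead invoke Theorem \ref{thm:nns} (the bichromatic bound) with query and reference both equal to $C$, which introduces the term $c_k^9\, i_t(\mathscr{T}_c)$, and you then need $i_t(\mathscr{T}_c) = O(k \log k)$ to finish.

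That last absorption step is a genuine gap, and you correctly sense it. The paper explicitly states that bounding $i_t(\cdot)$ is an open problem; the linear scaling of imbalance is an \emph{empirical} observation, not a theorem, so appealing to it turns your argument into a heuristic exactly at the point where the stated bound must be unconditional in $i_t(\mathscr{T}_c)$ (note the theorem statement contains $i_t(\mathscr{T})$ but no $i_t(\mathscr{T}_c)$). Your alternative structural argument --- controlling the imbalance via tree depth --- also does not work: the imbalance counts \emph{missing} scale levels, which need not be bounded by any function of $k$ alone without further assumptions. There is also a smaller technical defect in the invocation itself: with query set equal to reference set, nearest neighbor search as bounded by Theorem \ref{thm:nns} returns each centroid as its own nearest neighbor at distance zero, whereas \texttt{UpdateTree()} needs $\min_{k \ne j} d(c_k, c_j)$, i.e.\ the nearest \emph{other} centroid; excluding self-matches is precisely what the monochromatic analysis of \cite{ram2009} handles. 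Replacing your invocation of Theorem \ref{thm:nns} by that monochromatic result repairs the proof and recovers the paper's argument.
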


\begin{proof}
Consider each of the steps of the algorithm individually:

\begin{itemize}
  \item \texttt{CoalesceNodes()} can be performed in a single pass of the cover
tree $\mathscr{N}$, which takes $O(N)$ time.

  \item Building a tree on the centroids ($\mathscr{T}_c$) takes $O(c_k^6 k \log
k)$ time due to the result for cover tree construction time \cite{langford2006}.

  \item The dual-tree algorithm takes $O(c_k^4 c_{qk}^5 (N + i_t(\mathscr{T})))$
time due to Theorem \ref{thm:dtkm}.

  \item \texttt{DecoalesceNodes()} can be performed in a single pass of the
cover tree $\mathscr{N}$, which takes $O(N)$ time.

  \item \texttt{UpdateCentroids()} can be performed in a single pass of the
cover tree $\mathscr{N}$, so it also takes $O(N)$ time.

  \item \texttt{UpdateTree()} depends on the calculation of how much each
centroid has moved; this costs $O(k)$ time.  In addition, we must find the
nearest centroid of every centroid; this is nearest neighbor search, and we may
use the runtime bound for monochromatic nearest neighbor search for cover trees
from \cite{ram2009}, so this costs $O(c_k^9 k)$ time.  Lastly, the actual tree
update visits each node once and iterates over each point in the node.  Cover
tree nodes only hold one point, so each visit costs $O(1)$ time, and with $O(N)$
nodes, the entire update process costs $O(N)$ time.  When we consider the
preprocessing cost too, the total cost of \texttt{UpdateTree()} per iteration is
$O(c_k^9 k + N)$.
\end{itemize}

We may combine these into a final result:

\vspace*{-1em}
\begin{eqnarray}
O(N) + O(c_k^6 k \log k) + O(c_k^4 c_{qk}^5 (N + i_t(\mathscr{T}))) + \nonumber \\
\ \ O(N) + 
O(N) + O(c_k^9 k + N)
\end{eqnarray}
\vspace*{-1em}

\noindent and after simplification, we get the statement of the theorem:

\vspace*{-1em}
\begin{equation}
O(c_k^4 c_{qk}^5 (N + i_t(\mathscr{T})) + c_k^9 k \log k).
\end{equation}
\end{proof}

Therefore, we see that under some assumptions on the data, we can bound the
runtime of the dual-tree $k$-means algorithm to something tighter than $O(kN)$
per iteration.  As expected, we are able to amortize the cost of $k$ across all
$N$ nodes, giving amortized $O(1)$ search for the nearest centroid per point in
the dataset.  This is similar to the results for nearest neighbor search, which
obtain amortized $O(1)$ search for a single query point.  Also similar to the
results for nearest neighbor search is that the search time may, in the worst
case, degenerate to $O(kN + k^2)$ when the assumptions on the dataset are not
satisfied.  However, empirical results \cite{ram2009rank, gray2001nbody,
march2010euclidean, langford2006} show that well-behaved datasets are common in
the real world, and thus degeneracy of the search time is uncommon.

Comparing this bound with the bounds for other algorithms is somewhat difficult;
first, none of the other algorithms have bounds which are adaptive to the
characteristics of the dataset.  It is possible that the blacklist algorithm
could be refactored to use the cover tree, but even if that was done it is not
completely clear how the running time could be bounded.  How to apply the
expansion constant to an analysis of Hamerly's algorithm and Elkan's algorithm
is also unclear at the time of this writing.

Lastly, the bound we have shown above is potentially loose.  We have reduced
dual-tree $k$-means to the problem of nearest neighbor search, but our pruning
rules are tighter.  Dual-tree nearest neighbor search assumes that every query
node will be visited (this is where the $O(N)$ in the bound comes from), but
dual-tree $k$-means can prune a query node entirely if all but one cluster is
pruned (Strategy 2).  These bounds do not take this pruning strategy into
account, and they also do not consider the fact that coalescing the tree can
greatly reduce its size.  These would be interesting directions for future
theoretical work.

\end{document}